\def\g{{\mathfrak g}}
\def\Z2{\mathbb{Z}_2^2}
\def\ph#1#2{(-1)^{\bm{#1}\cdot\bm{#2}}}
\def\f#1#2{f^{#1}_{\quad #2}}
\def\Dot#1#2{\bm{#1}\cdot\bm{#2}}
\newcommand{\NO}[1]{\vcentcolon\mathrel{#1}\vcentcolon}
\providecommand{\vcentcolon}{\mathrel{\mathop{:}}}
\begin{document}

\markboth{Aizawa \& Segar}
{Affine $\mathbb{Z}_2^2$-$osp(1|2)$ and Virasoro algebra}

%
\catchline{}{}{}{}{}
%

\title{Affine extensions of $\mathbb{Z}_2^2$-graded $osp(1|2)$ and Virasoro algebra
}

\author{N. AIZAWA}

\address{Department of Physics, Graduate School of Science, \\Osaka Metropolitan University \\
Nakamozu Campus, Sakai, Osaka 599-8531, Japan\\
\email{aizawa@omu.ac.jp} }

\author{J. SEGAR}

\address{Department of Physics, College of Engineering, \\Anna University,\\
	Guindy, Chennai 600025, India\\
	\email{segar@rkmvc.ac.in}}
\maketitle

\begin{history}
\received{(Day Month Year)}
\revised{(Day Month Year)}
\end{history}

\begin{abstract}
It is known that there are two inequivalent $\Z2$-graded $osp(1|2)$ Lie superalgebras. 
Their affine extensions are investigated and it is shown that one of them admits two central elements, one is non-graded and the other is $(1,1)$-graded. The affine $\Z2$-$osp(1|2)$ algebras are used by the Sugawara construction to study possible $\Z2$-graded extensions of the Virasoro algebra. We obtain a $\Z2$-graded Virasoro algebra with a non-trivially graded central element. 
Throughout the investigation, invariant bilinear forms on $\Z2$-graded superalgebras play a crucial role, so a theory of invariant bilinear forms is also developed. 
\end{abstract}

\keywords{$\Z2$-graded $osp(1|2)$, affinization, Sugawara construction}

%
\section{Introduction}

The $\mathbb{Z}_2^n$-graded Lie superalgebra is a generalization of the Lie superalgebra in the sense that the $\mathbb{Z}_2$-grading of the Lie superalgebra is replaced by the more general abelian group  $\mathbb{Z}_2^n := \mathbb{Z}_2 \times \mathbb{Z}_2 \times \dots \times \mathbb{Z}_2$ ($n$ times) \cite{rw1,rw2}.  
This is a special case of the so-called colour Lie (super)algebra (a.k.a $\epsilon$-Lie algebra) \cite{Ree,sch}. 

In recent years, the $\mathbb{Z}_2^n$-graded Lie (super)algebras have attracted renewed interest in physics and mathematics for several reasons. 
They are able to generate symmetries and such a symmetry can be found in simple physical systems, such as  non-relativistic spinorial equations \cite{aktt1,aktt2,Ryan1}. Supersymmetry is also extended to the $\mathbb{Z}_2^n$-graded setting \cite{Bruce,akt1,brusigma} and $\mathbb{Z}_2^n$-graded versions of supersymmetric quantum mechanics have been extensively discussed  \cite{BruDup,AAD,DoiAi1,Topp,Topp2,Quesne, aiitotana2, BruSuperisation}. 
The $\mathbb{Z}_2^n$-extended supersymmetries, by definition, introduce new types of paraparticles which are different from those  of Green and Greenberg. 
%
We mention the works in \cite{parasim,paraexp} which prove the possibility of experimentally engineering paraparticles using trapped ions. 
This implies that the paraparticles could be real, and motivates us to further study various types of parastatistics.

The $\mathbb{Z}_2^n$-graded supersymmetries have a contact with higher supergeometry through the superfield formalism \cite{aido1,aizt,aiit,aiitotana}. 
The higher supergeometry is a $\mathbb{Z}_2^n$-graded extension of the supergeometry  and consider ``manifolds" whose local coordinates are an abelian $\mathbb{Z}_2^n$-graded Lie superalgebra (see \cite{Pz2nint,PonSch} for a review). 
The higher supergeometry is very non-trivial  due to the existence of the exotic bosonic coordinates and further investigations are needed to  understand it better.

At the level of algebra, the $\mathbb{Z}_2^n$-grading also causes non-triviality in structure and representations  \cite{tol,CART,StoVDJ,StoVDJ3,PhilNeliJoris,StoVDJ4,Meyer,Ryan2,ChenZhan}, in particular, we mention the $\Z2$-extension of the superdivision algebras \cite{FraZhan}.
 
The present work is motivated, among other things, by recent developments on the $\Z2$-graded integrable systems \cite{aiitotana,bruSG,niktt}. 
These works have elucidated the existence of a new class of integrable systems characterized by the $\Z2$-graded Lie superalgebras by extending the zero-curvature equations, Polyakov's soldering and superfield methods to $\Z2$-graded setting. 

Recall that the integrable sinh-Gordon equation can be represented as the zero-curvature equation for the affine $sl(2)$ Lie algebra \cite{BB}. 
This is extended to the $\Z2$-setting by introducing a $\Z2$-graded extension of the affine $sl(2)$ Lie algebra and a $\Z2$-graded version of the sinh-Gordon equation was obtained in \cite{niktt}.  

A natural next step is a supersymmetrization of the results of \cite{niktt}, as done in \cite{TopZhang} for the non-graded case. 
To do this, we need to introduce a $\Z2$-graded version of affine Lie superalgebras. 
The purpose of the present work is to consider this problem for the simplest case of $osp(1|2)$. 
Our starting point is a finite dimensional $\Z2$-graded $osp(1|2), $ but there are two of them, one is of eight dimension and the other is ten \cite{RY}.  
We shall see that, as in the non-graded case, invariant bilinear forms of the $\Z2$-graded $osp(1|2)$ play important roles and there exist an invariant  bilinear form with non-trivial $\Z2$-grading. Therefore, we organize this paper as follow:

In the next section, definitions of $\Z2$-graded Lie superalgebras and trace over a $\Z2$-graded matrix are given. Then  we introduce $\Z2$-graded bilinear forms using a $\Z2$-graded matrix. 
In \S \ref{SEC:osp}, the two inequivalent finite dimensional $\Z2$-graded $osp(1|2)$ are introduced and their bilinear forms are studied. A classification of the invariant bilinear forms is given. 
Affine extensions of the $\Z2$-graded $osp(1|2)$ of \S \ref{SEC:osp} are carried out in \S \ref{SEC:AffinExt}. It is shown that the affinization of the ten dimensional $\Z2$-graded $osp(1|2)$ admits two central extensions, while the eight dimensional one has the unique central element. 
Using the infinite dimensional algebras obtained in \S \ref{SEC:AffinExt}, we extend the Sugawara construction to the $\Z2$-setting to obtain a $\Z2$-graded extension of the Virasoro algebra.
Due to the two central elements of the affine $\Z2$-graded $osp(1|2)$, we obtain a $\Z2$-graded Virasoro algebra with two central charges. 
We summarize the results and mention some future perspectives in \S \ref{SEC:CR}.

\section{Bilinear Forms on $\Z2$-graded Lie Superalgebras}

\subsection{Basic definitions}

Here we give the definition of a $\Z2$-graded Lie superalgebra \cite{rw1,rw2}. 
Let $ \g $ be a vector space over $\mathbb{C}$ and $ \bm{a} = (a_1, a_2)$ an element of  $\Z2$. 
Suppose that $ \g $ is a direct sum of graded components:
\begin{equation}
	\g = \bigoplus_{\bm{a} \in \Z2} \g_{\bm{a}} = \g_{(0,0)} \oplus \g_{(1,1)} \oplus \g_{(0,1)} \oplus \g_{(1,0)}.
\end{equation}
For convenience, we introduce the \textit{standard} ordering of elements of $\Z2$ by
\begin{equation}
	(0,0), \ (1,1), \ (0,1), \ (1,0)
\end{equation}
We denote homogeneous elements of $ \g_{\bm{a}} $ as $ X_{\bm{a}}, Y_{\bm{a}},
Z_{\bm{a}}$ and call $ \bm{a}$ the \textit{degree} of $X_{\bm{a}}.$ 
If $\g$ admits a bilinear operation (the general Lie bracket), denoted by $ \llbracket \cdot, \cdot \rrbracket, $ 
satisfying the identities
\begin{align}
	& \llbracket X_{\bm{a}}, Y_{\bm{b}} \rrbracket \in \g_{\bm{a}+\bm{b}},
	\\[3pt]
	& \llbracket X_{\bm{a}}, Y_{\bm{b}} \rrbracket = -(-1)^{\bm{a}\cdot \bm{b}} \llbracket Y_{\bm{b}}, X_{\bm{a}} \rrbracket,
	\\[3pt]
	& (-1)^{\bm{a}\cdot\bm{c}} \llbracket X_{\bm{a}}, \llbracket Y_{\bm{b}}, Z_{\bm{c}} \rrbracket \rrbracket
	+ (-1)^{\bm{b}\cdot\bm{a}} \llbracket Y_{\bm{b}}, \llbracket Z_{\bm{c}}, X_{\bm{a}} \rrbracket \rrbracket
	+ (-1)^{\bm{c}\cdot\bm{b}} \llbracket Z_{\bm{c}}, \llbracket X_{\bm{a}}, Y_{\bm{b}} \rrbracket
	\rrbracket =0
	\label{gradedJ}
\end{align}
where
\begin{equation}
	\bm{a} + \bm{b} = (a_1+b_1, a_2+b_2) \in \Z2, \qquad \bm{a}\cdot \bm{b} = a_1 b_1 + a_2 b_2,
\end{equation}
then $\g$ is referred to as a $\Z2$-graded Lie superalgebra. 

We take $\g$ to be contained in its enveloping algebra, via the identification
\begin{equation}
	\llbracket X_{\bm{a}}, Y_{\bm{b}} \rrbracket =  X_{\bm{a}} Y_{\bm{b}} - (-1)^{\bm{a}\cdot \bm{b}}
	Y_{\bm{b}} X_{\bm{a}}, \label{gradedcom}
\end{equation}
where an expression such as $ X_{\bm{a}} Y_{\bm{b}}$ is understood to denote the associative product
on the enveloping algebra. 
In other words, by definition, in the enveloping algebra the general Lie bracket $ \llbracket \cdot, \cdot
\rrbracket $ for homogeneous elements coincides with either a commutator or anticommutator. 

This is a natural generalization of Lie superalgebra which is defined with a $\mathbb{Z}_2$-graded structure:
\begin{equation}
	\g = \g_{(0)} \oplus \g_{(1)}
\end{equation}
with 
\begin{equation}
	\bm{a} + \bm{b} = (a+b), \qquad \bm{a} \cdot \bm{b} = ab.
\end{equation}
It should be noted that  $ \g_{(0,0)} \oplus \g_{(0,1)} $ and $ \g_{(0,0)} \oplus \g_{(1,0)} $ are
subalgebras of $\g$ (with $\Z2$-grading).

From now on, we denote the basis of $\g$ by $ X^1, X^2, \dots, X^r $ with $r = \dim \g. $ 
With a slight abuse of notation, we write 
$ \deg{X^a} = \bm{a}$. Introducing the structure constant, one may write
\begin{equation}
	\llbracket X^a, X^b \rrbracket = \sum_c \f{ab}{c}\, X^c.
\end{equation} 
By definition of the general Lie bracket, we have
\begin{equation}
	\f{ab}{c} = -\ph{a}{b} \f{ba}{c}.
\end{equation}

 Define the adjoint action of $X^a$  on $ Y \in \g $ by
\begin{equation}
	ad X^a(Y) := \llbracket X^a, Y \rrbracket,
\end{equation}
then it is immediate to see from Eq.\eqref{gradedJ} that the adjoint action is an algebraic homomolphism
\begin{equation}
	ad \llbracket X^a, X^b \rrbracket = \llbracket ad X^a, ad X^b \rrbracket. \label{ad-hom}
\end{equation}
Thus, the adjoint action defines the adjoint representation of $\g$. 
The matrix elements of the adjoint representation is given by the structure constant as usual:
\begin{equation}
	(ad X^a)_{ij} = \f{aj}{i}.
\end{equation}
Arranging the basis of $\g$ according to the standard ordering of $\Z2$, matrix of the adjoint representation of a homogeneous element has non-vanishing entries only on the position indicated below:
\begin{equation} 
	\begin{bmatrix}
		A^{(0,0)} & A^{(1,1)} & A^{(0,1)} & A^{(1,0)} 
		\\
		B^{(1,1)} & B^{(0,0)} & B^{(1,0)} & B^{(0,1)}
		\\
		C^{(0,1)} & C^{(1,0)} & C^{(0,0)} & C^{(1,1)}
		\\
		D^{(1,0)} & D^{(0,1)} & D^{(1,1)} & D^{(0,0)}
	\end{bmatrix}. 
	\label{Z22matrix}
\end{equation}
The $\Z2$-graded version of the supertrace of the matrix Eq.\eqref{Z22matrix} is defined by \cite{rw2}
\begin{equation}
	Str X := Tr A^{(0,0)} + Tr B^{(0,0)} - Tr C^{(0,0)} - Tr D^{(0,0)}
\end{equation}
and has the properties
\begin{align}
	Str X &= 0 \quad \text{for} \quad  \deg(X) \neq (0,0),
	\label{str1} \\[3pt]
	Str(XY) &= \ph{a}{b}\, Str(YX), \quad X \in \g_{\bm{a}}, Y \in \g_{\bm{b}}
	\label{str2}
\end{align}

\subsection{Bilinear forms and $\Z2$-graded Casimir elements}

Let $ad X^a$ be the matrix of the adjoint representation of $\g$ (this abuse of notation will not cause any confusion). 
We introduce a $r \times r $ matrix $M$ of degree $\bm{m} \in \Z2 $ satisfying
\begin{equation}
	\llbracket M, ad X^a \rrbracket = 0, \quad \forall X^a \in \g \label{Mcommute}
\end{equation}

\begin{definition}
	A bilinear form $\eta$ on $\g$ is defined by
	\begin{equation}
		\eta^{ab} := \eta(X^a, X^b) :=  Str(ad X^a M ad X^b). \label{eta-def}
	\end{equation}
\end{definition}

\begin{proposition} \label{P:BLFs}
	The bilinear form $ \eta$ satisfies
	\begin{description}
		\renewcommand{\labelenumi}{(\roman{enumi})}
		\item[\ (i)] $ \eta^{ab} = 0 $ if $ \bm{a} + \bm{b} \neq \bm{m}$
		\item[\ (ii)] $ \eta^{ba} = (-1)^{\Dot{a}{b}+ \Dot{m}{m}} \eta^{ab}$
		\item[\ (iii)] $ \eta(\llbracket X^a, X^b\rrbracket, X^c) = (-1)^{\Dot{m}{b}}\, \eta(X^a,\llbracket X^b, X^c \rrbracket)$
	\end{description}
	(iii) is the adjoint invariance of $\eta.$ 
\end{proposition}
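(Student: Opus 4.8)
The plan is to treat $\eta^{ab}$ throughout as the supertrace of a product of three homogeneous $\Z2$-graded matrices $ad X^a$, $M$, $ad X^b$ of degrees $\bm{a}$, $\bm{m}$, $\bm{b}$ respectively, and to use only three ingredients: the vanishing Eq.~\eqref{str1} of the supertrace off degree $(0,0)$, the graded cyclicity Eq.~\eqref{str2}, and the pointwise form of the defining relation \eqref{Mcommute},
\[
	M \, ad X^a = (-1)^{\bm{m}\cdot\bm{a}} \, ad X^a \, M ,
\]
which I would record first. Part (i) then follows by degree counting alone: the matrix $ad X^a\, M\, ad X^b$ is homogeneous of degree $\bm{a}+\bm{m}+\bm{b}$, so Eq.~\eqref{str1} forces its supertrace to vanish unless $\bm{a}+\bm{m}+\bm{b}=(0,0)$, that is (since $-\bm{m}=\bm{m}$ in $\Z2$) unless $\bm{a}+\bm{b}=\bm{m}$.

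For (ii) I would begin from $\eta^{ab}=Str\big((ad X^a\, M)\,ad X^b\big)$ and cycle $ad X^b$ to the front by Eq.~\eqref{str2}, picking up $(-1)^{(\bm{a}+\bm{m})\cdot\bm{b}}$; then the pointwise relation above lets me slide $M$ back through $ad X^a$, picking up a further $(-1)^{\bm{m}\cdot\bm{a}}$ and leaving precisely $Str(ad X^b\, M\, ad X^a)=\eta^{ba}$. This gives $\eta^{ab}=(-1)^{\bm{a}\cdot\bm{b}+\bm{m}\cdot\bm{a}+\bm{m}\cdot\bm{b}}\,\eta^{ba}$. The only remaining point — and the sole place where (i) enters — is to note that on the support $\bm{a}+\bm{b}=\bm{m}$ one has $\bm{m}\cdot\bm{a}+\bm{m}\cdot\bm{b}=\bm{m}\cdot(\bm{a}+\bm{b})=\bm{m}\cdot\bm{m}$, collapsing the exponent to $\bm{a}\cdot\bm{b}+\bm{m}\cdot\bm{m}$; off the support both sides vanish by (i) and the identity is trivial.

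For (iii) I would first use the homomorphism Eq.~\eqref{ad-hom} to replace $ad\llbracket X^a, X^b\rrbracket$ by $\llbracket ad X^a, ad X^b\rrbracket$, and similarly on the right, turning both sides into linear combinations of supertraces of fourfold products. Writing $P=ad X^a$, $Q=ad X^b$, $R=ad X^c$, the left side becomes $Str(PQMR)-(-1)^{\bm{a}\cdot\bm{b}}Str(QPMR)$ and the right side $(-1)^{\bm{m}\cdot\bm{b}}\big[Str(PMQR)-(-1)^{\bm{b}\cdot\bm{c}}Str(PMRQ)\big]$. I would then reduce the two left-hand supertraces to the two canonical forms on the right: $Str(PQMR)$ by sliding $M$ through $Q$ (factor $(-1)^{\bm{m}\cdot\bm{b}}$), and $Str(QPMR)$ by cycling $Q$ to the very end with Eq.~\eqref{str2}, treating $PMR$ as one block of degree $\bm{a}+\bm{m}+\bm{c}$ (factor $(-1)^{\bm{a}\cdot\bm{b}+\bm{m}\cdot\bm{b}+\bm{b}\cdot\bm{c}}$). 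Combining these and discarding the even exponent $2\bm{a}\cdot\bm{b}$, the two sides match term by term, with the overall $(-1)^{\bm{m}\cdot\bm{b}}$ emerging uniformly.

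All of this is routine once graded cyclicity and the $M$-slide are in hand; the step I expect to demand the most care is the sign bookkeeping in (iii), where one must track several exponents and check that the unwanted ones are even and cancel. The one conceptual subtlety worth flagging is that (ii) genuinely relies on (i): without restricting to $\bm{a}+\bm{b}=\bm{m}$ the symmetry exponent does not reduce to $\bm{m}\cdot\bm{m}$.
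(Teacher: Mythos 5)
Your proposal is correct and takes essentially the same route as the paper's proof: degree counting for (i), graded cyclicity \eqref{str2} plus the commutation relation \eqref{Mcommute} followed by restriction to the support $\bm{a}+\bm{b}=\bm{m}$ for (ii), and the same expand--slide--cycle supertrace computation for (iii), with your $P,Q,R$ playing the role of the paper's $A,B,C$. The only cosmetic difference is that in (ii) you express $\eta^{ab}$ in terms of $\eta^{ba}$ rather than the reverse, which is equivalent since the sign factor squares to one.
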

\begin{proof}
(i) follows from that $ Str(ad X^a M ad X^b) = 0 $ if $ \bm{a} + \bm{b} \neq \bm{m}.$ 

\noindent 
(ii) is proved by direct computation as follows:
\begin{align*}
	\eta^{ba} &=  Str(ad X^b M ad X^a) \stackrel{\eqref{str2}}{=} (-1)^{\bm{a} \cdot (\bm{b}+\bm{m})}  Str(ad X^a ad X^b M)
	\\
	&\stackrel{\eqref{Mcommute}}{=} (-1)^{\bm{a} \cdot (\bm{b}+\bm{m})} (-1)^{\Dot{b}{m}}  Str(adX^a M ad X^b) = (-1)^{ \Dot{a}{b}+\Dot{m}{m} } \eta^{ab}.
\end{align*}
The last equality follows from $ \bm{a}+\bm{b} = \bm{m}.$ 

\noindent
(iii) Using Eqs.\eqref{str2}, \eqref{Mcommute} and \eqref{ad-hom}, it is proved by direct computation. 
We set $ A := ad X^a, B := ad X^b, C := ad X^c$ for better readability.
\begin{align*}
	\eta(\llbracket X^a, X^b\rrbracket, X^c) &=  Str(\,ad\llbracket X^a, X^b \rrbracket M ad X^c\,)
	\\
	&=  Str( AB M C) - (-1)^{\Dot{a}{b}}  Str( B A M  C)
	\\
	&= (-1)^{\Dot{m}{b}}  Str(A M B C)
	- (-1)^{\Dot{a}{b}} (-1)^{\bm{b} \cdot (\bm{a}+\bm{m}+\bm{c})}  Str(A M C B)
	\\
	&= (-1)^{\Dot{m}{b}} \big( Str( A M B C) - (-1)^{\Dot{b}{c}}  Str(A M C B) \big)
	\\
	&= (-1)^{\Dot{m}{b}} \eta(X^a,\llbracket X^b, X^c \rrbracket).
\end{align*}
\end{proof}
In terms of the structure constants, Proposition \ref{P:BLFs} (iii) reads as follows:
\begin{equation}
		\sum_k \f{ab}{k} \eta^{kc} = \sum_k (-1)^{\Dot{m}{b}} \eta^{ak} \f{bc}{k}. \label{P13f}
\end{equation}
When the bilinear form $\eta = (\eta^{ab})$ is non-degenerate, we have the inverse $ \eta^{-1} = (\eta_{ab})$. In terms of $\eta^{-1}$, Eq.\eqref{P13f} yields
\begin{equation}
	\sum_k \eta_{ak} \f{kb}{c} = \sum_k \ph{m}{b} \f{bk}{a} \eta_{kc}. \label{P13finv}
\end{equation}

\noindent
\textbf{Remarks.}
\begin{enumerate}
	\item Proposition \ref{P:BLFs} (i) implies that $\eta $ is graded with $\deg(\eta) = \bm{m}$, since $\eta$ has the block form Eq.\eqref{Z22matrix} and its non-vanishing entries are only on the block with the suffix $\bm{m}$. 
	\item  One may take $ M = \mathbb{I}_r$ (identity matrix) as the $(0,0)$-graded matrix $M.$ Then $\eta$ is a natural $\Z2$-extension of the Killing form of Lie algebras. We use $g$ instead of $\eta $ for the $(0,0)$-graded bilinear form:
	\begin{equation}
		g^{ab} := K(X^a,X^b) :=  Str(ad X^a ad X^b) = \sum_{i,j} \ph{i}{j} \f{aj}{i} \f{bi}{j}. \label{Killing-def}
	\end{equation}
	From Proposition \ref{P:BLFs} we have
	\begin{equation}
		g^{ab} = \ph{a}{b} g^{ba},
	\end{equation}
	and
	\begin{equation}
		K(\llbracket X^a, X^b \rrbracket, X^c) = K(X^a, \llbracket X^b, X^c \rrbracket) 
		\quad \Leftrightarrow \quad \sum_d \f{ab}{d}\, g^{dc} = \sum_d g^{ad} \f{bc}{d}. 
	\end{equation}
	\item If $\det(M) = (1,1)$, then $ \eta $ is symmetric as $ \Dot{a}{b} = 0$ due to Proposition \ref{P:BLFs} (i). 
\end{enumerate}

The non-degenerate $\eta$ gives the Casimir elements of $\g:$
\begin{proposition} \label{P:Casimir}
	2nd Order $\Z2$-graded Casimir elements of $\g$ are given by
	\begin{align}
		& C_{\bm{m}} = \sum_{a,b} \eta_{ab} X^a X^b, \quad \bm{m} = \deg(\eta) = \deg(C_{\bm{m}}),
		\notag \\
		& \llbracket X^a, C_{\bm{m}} \rrbracket = 0, \quad \forall X^a \in \g
	\end{align}
\end{proposition}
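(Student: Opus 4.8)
The plan is to verify the defining property $\llbracket X^c, C_{\bm m}\rrbracket = 0$ directly, reducing it to the adjoint invariance of $\eta$ recorded in Proposition \ref{P:BLFs}(iii) (equivalently its structure-constant form \eqref{P13finv}). The only extra ingredient is a graded Leibniz rule for the action of $\llbracket X^c, \cdot\rrbracket$ on products in the enveloping algebra. The grading $\deg(C_{\bm m}) = \bm m$ is immediate from Proposition \ref{P:BLFs}(i), since $\eta_{ab}$ is nonzero only when $\bm a + \bm b = \bm m$.

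First I would establish, straight from the definition \eqref{gradedcom}, the graded Leibniz identity
\begin{equation*}
\llbracket X^c, X^a X^b\rrbracket = \llbracket X^c, X^a\rrbracket X^b + \ph{c}{a}\, X^a \llbracket X^c, X^b\rrbracket,
\end{equation*}
which follows by expanding both sides into associative products and checking that the cross terms cancel. Applying this to $C_{\bm m} = \sum_{a,b}\eta_{ab}X^a X^b$ gives
\begin{equation*}
\llbracket X^c, C_{\bm m}\rrbracket = \sum_{a,b}\eta_{ab}\llbracket X^c,X^a\rrbracket X^b + \sum_{a,b}\eta_{ab}\,\ph{c}{a}\, X^a\llbracket X^c,X^b\rrbracket .
\end{equation*}
Then I would substitute $\llbracket X^c,X^a\rrbracket = \sum_k \f{ca}{k}X^k$ in each term and collect the coefficient of a fixed ordered monomial $X^p X^q$; this produces an expression of the form $\sum_k \eta_{kq}\f{ck}{p} + \ph{c}{p}\sum_k \eta_{pk}\f{ck}{q}$.

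The core of the proof is to show this coefficient vanishes. I would rewrite the first sum using the antisymmetry $\f{ca}{b} = -\ph{c}{a}\f{ac}{b}$ of the structure constants together with the graded symmetry of $\eta^{-1}$ inherited from Proposition \ref{P:BLFs}(ii), and then convert it into the second sum by a single application of the invariance identity \eqref{P13finv}. The decisive simplification is that inside each sum the non-vanishing terms force the summed degree to a fixed value (through the constraints $\bm k + \bm q = \bm m$ from $\eta$ and $\bm c + \bm k = \bm p$ from the structure constant), so every phase $\ph{c}{k}$, $\ph{k}{q}$, etc. becomes a constant that can be pulled outside the sum before invariance is applied.

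The hard part will be the sign bookkeeping: one must check that the accumulated phase from the antisymmetry step, the symmetry of $\eta^{-1}$, the factor $\ph{m}{b}$ appearing in \eqref{P13finv}, and the overall $\ph{c}{p}$ all combine — after reducing modulo $2$ and using the degree relation $\bm p + \bm q = \bm c + \bm m$ — to make the two terms exactly opposite. Once this phase identity is verified, the coefficient of every monomial vanishes, and hence $\llbracket X^c, C_{\bm m}\rrbracket = 0$ for all $c$, which is the assertion.
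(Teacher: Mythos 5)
Your proposal follows essentially the same route as the paper's own proof: expand $\llbracket X^c, C_{\bm{m}}\rrbracket$ via the graded Leibniz rule, collect the coefficient of each ordered monomial, and annihilate it using the invariance identity \eqref{P13finv} together with the (anti)symmetries of $\f{ab}{c}$ and $\eta_{ab}$, pulling the phases out of the sums as constants fixed by the degree constraints. The sign bookkeeping you flag as the hard part does close: using $\bm{p}+\bm{q}=\bm{c}+\bm{m}$ and the degree of the summation index forced by $\eta$ and $f$, the accumulated phase reduces modulo $2$ to exactly $-\ph{c}{p}$, so the two terms cancel as required.
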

\begin{proof} 
	Using the symmetries of $ \f{ab}{c}, \eta_{ab} $ and $ \f{ab}{c} = 0 $ if $ \bm{a}+\bm{b} \neq \bm{c}$, one may compute as follows:
	\begin{align*}
	\llbracket X^a, C_{\bm{m}} \rrbracket 
	&= \sum_{b,c} \eta_{bc} \big(\, \llbracket X^a, X^b \rrbracket X^c + \ph{a}{b} X^b \llbracket X^a, X^c \rrbracket \,  \big)
	\\
	&= \sum_{b,c,c} (\eta_{cd} \f{ac}{b} + \ph{a}{b} \f{ac}{d} \eta_{bc} ) X^b X^d
	\\
	&= \sum_{b,c,c} (-1)^{ 1+\Dot{m}{m} + (\bm{a}+\bm{d})\cdot (\bm{m}+\bm{d}) } (\eta_{dc} \f{ca}{b} - \ph{m}{a} \f{ac}{d} \eta_{cb}) X^b X^d = 0.
  \end{align*}
	The last equality is due to	Eq.\eqref{P13finv}.
\end{proof}

%
\section{$\Z2$-Graded $osp(1|2)$ Superalgebras} \label{SEC:osp}

As shown in \cite{RY}, there exist two inequivalent $\Z2$-graded extensions of the Lie superalgebra $ osp(1|2).$ One of them is eight dimensional and the other is ten dimensional. 
Each extension is simply denoted by $\g^8 $ and $ \g^{10}$, respectively. 

\subsection{$\g^8$ $:$ eight dimensional $\Z2$-$osp(1|2)$}

The basis of $\g^8$ is taken to be
\begin{equation}
	\begin{array}{ccccc}
		ad R \ & \; (0,0) \; &\; (1,1)\; &\; (0,1)\; &\; (1,0)
		\\[3pt]
		+2 &  L^+ & & & 
		\\
		+1 & & & a^+ & \tilde{a}^+
		\\
		0  & R & \tilde{R} & &
		\\
		-1 & & & a^- & \tilde{a}^-
		\\
		-2 & L^-
	\end{array}
\end{equation}
and they are subject to the non-vanishing relations:
\begin{alignat}{4}
	[R,L^{\pm}] &= \pm 2 L^{\pm}, & \qquad [L^{+},L^{-}] &= -R, & \qquad [R, a^{\pm}] &= \pm a^{\pm}, & \qquad  
	[ R, {\tilde{a}}^{\pm}] &= \pm {\tilde{a}}^{\pm}, 
	\notag \\[3pt]
	[L^{\pm}, a^{\mp}] &= \mp a^{\pm}, & 
	\{\tilde{R}, a^{\pm}\} &= \pm \tilde{a}^{\pm}, & \{\tilde{R}, \tilde{a}^{\pm}\} &= \mp a^{\pm}, &
	\{a^{+}, a^{-}\} &= 2R, 
	\notag \\
	[a^{\pm}, \tilde{a}^{\mp}] &= 2 \tilde{R}, & \{ \tilde{a}^{+}, \tilde{a}^{-} \} &= 2R, &  
	[L^{\pm}, \tilde{a}^{\mp}] &= \pm \tilde{a}^{\pm}, &
	\{a^{\pm}, a^{\pm}\} &= 4L^{\pm},
	\notag \\
	\{\tilde{a}^{\pm}, \tilde{a}^{\pm}\} &= -4L^{\pm}. 
\end{alignat}
We ordered the basis in the following way:
\begin{equation}
	(X^1, X^2, \dots, X^8) = (L^+, R, L^-,  \tilde{R},  a^+,a^-, \tilde{a}^+, \tilde{a}^-). \label{8Dorderedbasis}
\end{equation}

The invariant bilinear forms on $\g^8$ follow from the folowing observation:
\begin{proposition} \label{P:8DM0}
	The scalar matrix is the only one satisfying Eq.\eqref{Mcommute}. Namely, there is no matrix $M$ satisfying Eq.\eqref{Mcommute} with $ \deg(M) \neq 0.$ 
\end{proposition}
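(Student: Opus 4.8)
The plan is to diagonalize the problem with the adjoint weight grading supplied by $R$ before imposing any fermionic relation. Since $M$ is assumed to graded-commute with every $ad X^a$, it does so in particular with $ad R$; and because $\deg R=(0,0)$, Eq.\eqref{Mcommute} for $X^a=R$ is the ordinary commutator $[M,ad R]=0$ whatever the degree $\bm m$ of $M$. As $ad R$ is diagonal in the weight basis, $M$ must preserve each of its eigenspaces. Reading the weights off the basis table, $L^\pm,R,\tilde R$ (weights $\pm2,0$) carry degrees in $\{(0,0),(1,1)\}$, whereas $a^\pm,\tilde a^\pm$ (weights $\pm1$) carry degrees in $\{(0,1),(1,0)\}$; thus the two degree-families $\{(0,0),(1,1)\}$ and $\{(0,1),(1,0)\}$ sit over disjoint sets of $R$-weights. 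A nonzero entry of $M$ can only join two basis vectors of equal weight whose degrees differ by $\bm m$.

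This instantly disposes of three of the four degrees. Adding $(0,1)$ or $(1,0)$ to a degree swaps the two families, hence the two disjoint weight-sets; since $M$ preserves weight, no entry survives and $M=0$ for $\bm m\in\{(0,1),(1,0)\}$. For $\bm m=(1,1)$ each family is preserved, and within each weight the only admissible moves are $R\leftrightarrow\tilde R$ and $a^\pm\leftrightarrow\tilde a^\pm$, while $ML^\pm=0$ is forced for want of a degree-$(1,1)$ vector at weight $\pm2$. Substituting this ansatz into $\llbracket M,ad a^\pm\rrbracket=0$---an \emph{anticommutator}, since $(1,1)\cdot(0,1)=1$---and using $\{a^+,a^-\}=2R$, $[a^\pm,\tilde a^\mp]=2\tilde R$, $\{a^\pm,a^\pm\}=4L^\pm$ together with the vanishing of the like-sign bracket $[a^\pm,\tilde a^\pm]$, collapses every parameter to zero. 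Hence no $M$ of nonzero degree exists, which is the assertion of the proposition.

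For completeness, the same weight argument makes the remaining case $\bm m=(0,0)$ diagonal in the basis Eq.\eqref{8Dorderedbasis}, and then $[M,ad a^\pm]=0$ chains the diagonal entries: $ad a^+(a^-)=2R$ and $ad a^+(a^+)=4L^+$ tie the $R$-, $a^-$-, $a^+$- and $L^+$-entries, the mirror relations with $a^-$ fix the $L^-$-entry, and $[M,ad\tilde R]=0$ with $\{\tilde R,a^\pm\}=\pm\tilde a^\pm$ ties the tilded entries to the untilded ones. All eight entries then coincide, so $M$ is scalar, recovering the Killing-type form of Remark 2.

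The conceptual content is the $R$-weight decomposition, which is immediate; the only delicate part is the sign and bracket bookkeeping in the $(1,1)$ and $(0,0)$ cases, where $\llbracket M,ad X\rrbracket$ alternates between commutator and anticommutator according to $(-1)^{\bm m\cdot\bm a}$ and where one must keep $[a^\pm,\tilde a^\mp]=2\tilde R$ distinct from the vanishing $[a^\pm,\tilde a^\pm]$. I anticipate no genuine obstruction: once the weight grading block-diagonalizes $M$, each case involves only a handful of unknowns that a single generator such as $ad a^+$ already over-determines.
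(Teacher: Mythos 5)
Your proof is correct, and at bottom it is the same argument as the paper's: the paper disposes of the proposition as a ``straightforward computation,'' imposing first the constraints $[ad\, L^{\pm},M]=[ad\, R,M]=0$ and then the remaining conditions of Eq.\eqref{Mcommute}, degree by degree. What you do differently is organize that computation around the $ad\, R$-weight grading, and this buys real economy: since $\deg R=(0,0)$, the bracket $\llbracket M, ad\, R\rrbracket$ is an ordinary commutator whatever $\deg(M)$ is, so $M$ preserves weight spaces; and because the degree families $\{(0,0),(1,1)\}$ and $\{(0,1),(1,0)\}$ of $\g^8$ sit over the disjoint weight sets $\{0,\pm 2\}$ and $\{\pm 1\}$, the cases $\deg(M)=(0,1),(1,0)$ are killed with no computation at all, while the $(1,1)$ and $(0,0)$ cases reduce to a handful of parameters ($R\leftrightarrow\tilde R$, $a^{\pm}\leftrightarrow\tilde a^{\pm}$, with $ML^{\pm}=0$ forced by the absence of $\tilde L^{\pm}$ in $\g^8$). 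I checked the remaining bookkeeping: for $\deg(M)=(1,1)$ the anticommutator condition $M\, ad\, a^{+}+ad\, a^{+}\, M=0$, evaluated on $a^-,\tilde a^-,\tilde a^+,R,\tilde R,L^-$, already forces every parameter to zero, and for $\deg(M)=(0,0)$ the relations you cite ($ad\, a^{+}$ acting on $a^{\pm}$, $R$, $\tilde a^-$, $\tilde R$, $L^-$, plus $ad\,\tilde R$ to link tilded and untilded entries) chain all eight diagonal entries into one scalar. So both proofs are sound; yours trades the paper's brevity for a transparent structural reason why the $(0,1)$ and $(1,0)$ cases are empty, which is the conceptual content of the proposition.
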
	
\begin{proof}
	The assertion is verified by the straightforward computation. 
	For instance, suppose that $\deg(M) = (1,1)$. The condition $ [ad L^{\pm}, M] = [R, M] = 0$ puts some constraints on the entries of $M$. Solving all the constraints obtained from Eq.\eqref{Mcommute}, one may see that $M =0.$ Repeating this for the Matrix $M$ of the other degree, one may complete the proof.
\end{proof}

Proposition \ref{P:8DM0} means that the invariant bilinear form on $\g^8 $ is only the $\Z2$-version of the Killing form. 
The non-vanishing values of the $\Z2$-Killing form and its inverse are given by
\begin{align}
	g^{13}&= g^{31} =  -2,  \qquad\quad  
	g^{22} = g^{44} = 4,
	\notag \\
	g^{56} &= -g^{65} = g^{78} = -g^{87} = 8
\end{align}
and
\begin{align}
	g_{13} &= g_{31} = -\frac{1}{2}, \qquad \quad 
	g_{22} = g_{44} = \frac{1}{4}, 
	\notag \\
	g_{56} &= -g_{65} = g_{78} = - g_{87} = -\frac{1}{8}.
\end{align}
It follows that the 2nd order Casimir of $\g^8 $ is 
\begin{equation}
	C_2=\frac{1}{8}(2R^2 + 2\tilde{R}^2-4\{L^{+},L^{-} \} -[a^{+}, a^{-}]-[\tilde{a}^{+},\tilde{a}^{-}] )
\end{equation} 
which is identical to the one in \cite{FaFaJ}.

\subsection{$\g^{10}$ $:$ ten dimensional $\Z2$-$osp(1|2)$}

The basis of $\g^{10}$ is taken to be
\begin{equation}
\begin{array}{ccccc}
	ad R \ & \;(0,0)\; & \; (1,1)\; & \; (0,1)\; & \;(1,0)
	\\[3pt]
	+2 &  L^+ & \tilde{L}^+ & & 
	\\
	+1 & & & a^+ & \tilde{a}^+
	\\
	0  & R & \tilde{R} & &
	\\
	-1 & & & a^- & \tilde{a}^-
	\\
	-2 & L^- & \tilde{L}^- & & 
\end{array} 
\label{10Dbasis}
\end{equation}
and their non-vanishing relations are given by
\begin{alignat}{4}
	[R,L^{\pm}] &= \pm 2 L^{\pm}, &\qquad [R, \tilde{L}^{\pm}] &= \pm 2\tilde{L}^{\pm}, &\qquad  [R, a^{\pm}] &= \pm a^{\pm}, &\qquad  
	[ R, {\tilde{a}}^{\pm}] &= \pm {\tilde{a}}^{\pm}, 
	\notag \\[3pt]
	[\tilde{R},L^{\pm}] &= \pm 2\tilde{L}^{\pm}, & [\tilde{R}, \tilde{L}^{\pm}] &= \pm 2L^{\pm},& 
	\{\tilde{R}, a^{\pm}\} &= \tilde{a}^{\pm}, & \{\tilde{R}, \tilde{a}^{\pm}\} &= a^{\pm}, 
	\notag \\[3pt]
	[L^{+},L^{-}] &= -R, & [L^{\pm}, \tilde{L}^{\mp}] &= \mp \tilde{R}, & [\tilde{L}^{+}, \tilde{L}^{-}] &= -R, & 
	\notag \\[3pt]
	[L^{\pm}, \tilde{a}^{\mp}] &= \pm \tilde{a}^{\pm},& [L^{\pm}, a^{\mp}] &= \mp a^{\pm}, &
	\{ \tilde{L}^{\pm}, a^{\mp}\} &= -\tilde{a}^{\pm}, & \{\tilde{L}^{\pm}, \tilde{a}^{\mp}\} &= a^{\pm},
	\notag \\[3pt]
	\{a^{+}, a^{-}\} &= 2R, & [a^{\pm}, \tilde{a}^{\mp}] &= \pm 2 \tilde{R},& \{ \tilde{a}^{+}, \tilde{a}^{-} \} &= 2R, & 
	\notag \\[3pt]
	[a^{\pm}, \tilde{a}^{\pm}] &= \mp 4 \tilde{L}^{\pm}, & \{a^{\pm}, a^{\pm}\} &= 4L^{\pm},& 
	\{\tilde{a}^{\pm}, \tilde{a}^{\pm}\} &= -4L^{\pm}.
\end{alignat}
The ordering of the basis is taken to be
\begin{equation}
	(X^1, X^2, \dots, X^{10}) = (L^+, R, L^-, \tilde{L}^+, \tilde{R}, \tilde{L}^-, a^+,a^-, \tilde{a}^+, \tilde{a}^-). \label{Ordered10D}
\end{equation}

\begin{proposition} \label{P:10DMlist}
	The graded matrix $M$ satisfying Eq.\eqref{Mcommute} is unique up to an overall constant:
\begin{itemize}
	\item $\deg(M) = (0,0) \quad \Rightarrow \quad M = \mathbb{I}_{10}$
	\item $ \deg(M) = (0,1), (1,0) \quad \Rightarrow \quad M = 0$
	\item $ \deg(M) = (1,1) \quad \Rightarrow \quad \displaystyle 
	M = 
	\begin{bmatrix}
		0 & \mathbb{I}_3 & 0 & 0
		\\
		\mathbb{I}_3 & 0 & 0 & 0 
		\\
		0 & 0 & 0 & \sigma_3
		\\
		0 & 0 & \sigma_3 & 0
	\end{bmatrix}
	$ 
\end{itemize}
where $ \mathbb{I}_n, \sigma_3 $ is the $ n \times n $ unit matrix and the Pauli matrix, respectively. 
\end{proposition}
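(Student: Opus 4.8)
The plan is to treat this as a graded Schur's lemma carried out by explicit linear algebra on the ten adjoint matrices. The key preliminary reduction is that the set of $Y \in \g^{10}$ for which $\llbracket M, ad Y\rrbracket = 0$ is a subalgebra: if $M$ graded-commutes with $ad X$ and with $ad Y$, then by the homomorphism property Eq.\eqref{ad-hom} together with the graded Jacobi identity Eq.\eqref{gradedJ} applied to $\llbracket M, \llbracket ad X, ad Y\rrbracket\rrbracket$ it graded-commutes with $ad\llbracket X, Y\rrbracket$. Hence it suffices to impose Eq.\eqref{Mcommute} on a generating set, and, for sufficiency, to verify the resulting $M$ on that set; I would take $\{a^\pm, \tilde a^\pm\}$, whose (anti)commutators reproduce all of $L^\pm, \tilde L^\pm, R, \tilde R$.

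First I would exploit $ad R$, which is diagonal with eigenvalues $2,0,-2$ on each bosonic block $\g_{(0,0)}, \g_{(1,1)}$ and $1,-1$ on each fermionic block $\g_{(0,1)}, \g_{(1,0)}$. Since $\deg(R)=(0,0)$, the bracket $\llbracket M, ad R\rrbracket$ is an ordinary commutator for every $\deg(M)$, so $[M, ad R]=0$ forces $M_{ij}=0$ unless $X^i$ and $X^j$ carry the same $ad R$-eigenvalue. This single condition already settles the off-diagonal degrees: by the pattern of Eq.\eqref{Z22matrix} a matrix of degree $(0,1)$ or $(1,0)$ has its nonzero blocks linking a bosonic block to a fermionic block, but the bosonic eigenvalues $\{0,\pm 2\}$ and the fermionic eigenvalues $\{\pm 1\}$ never coincide, so $M=0$ in those two cases.

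For $\deg(M)=(0,0)$ the eigenvalue condition makes $M$ diagonal, since within each block the three (or two) $ad R$-eigenvalues are distinct. I would then impose $[M, ad L^+]=0$; for diagonal $M$ this reads $(M_{ii}-M_{jj})(ad L^+)_{ij}=0$, so $M_{ii}=M_{jj}$ whenever $ad L^+$ connects $i$ and $j$. The nonzero entries of $ad L^+$ equate the three diagonal values inside $\g_{(0,0)}$, the three inside $\g_{(1,1)}$, and the two inside each fermionic block. Adjoining the fermionic raising operators $ad a^+$ and $ad\tilde a^+$ (for instance $\{a^+,a^+\}=4L^+$ links $\g_{(0,0)}$ to $\g_{(0,1)}$, while $\llbracket\tilde a^+, a^-\rrbracket = 2\tilde R$ links $\g_{(1,1)}$ to $\g_{(0,1)}$) ties all five eigenspaces together, forcing every diagonal entry equal, i.e.\ $M = c\,\mathbb{I}_{10}$.

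The substantive case is $\deg(M)=(1,1)$, where after the $ad R$ step $M$ collapses to four diagonal blocks: $A,B$ ($3\times 3$, linking $\g_{(0,0)}\leftrightarrow\g_{(1,1)}$) and $C,D$ ($2\times 2$, linking $\g_{(0,1)}\leftrightarrow\g_{(1,0)}$). Imposing $[M, ad L^+]=0$ forces $A$ and $B$ to be scalar; crucially, on the fermionic blocks the raising action carries opposite signs, $[L^+,a^-]=-a^+$ versus $[L^+,\tilde a^-]=+\tilde a^+$, so the same computation forces $C,D\propto\sigma_3$ rather than $\propto\mathbb{I}_2$. Finally the mixed generators fix the remaining constants: $\llbracket M, ad a^+\rrbracket=0$, which is an anticommutator since $\Dot{m}{a}=1$, equates the bosonic and fermionic scalars through entries such as $\{a^+,a^+\}=4L^+$ and $[a^+,\tilde a^+]=-4\tilde L^+$, and $[M, ad\tilde R]=0$ forces $A=B$ and pins $D$ to $C$, leaving one overall constant and exactly the matrix claimed. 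I expect the bookkeeping of these fermionic signs — landing on $\sigma_3$ rather than $\mathbb{I}_2$ and confirming consistency across all generators in the sufficiency check — to be the only delicate point; everything else is a finite, mechanical linear solve.
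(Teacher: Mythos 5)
Your proposal is correct and is essentially the paper's own (omitted) proof: the paper likewise verifies the claim by a direct linear solve, imposing $[M, ad\, R]=\llbracket M, ad\, L^{\pm}\rrbracket=0$ and the remaining generator constraints degree by degree, which is exactly your eigenvalue-matching and block-reduction computation, including the sign flip $[L^+,a^-]=-a^+$ versus $[L^+,\tilde{a}^-]=+\tilde{a}^+$ that produces $\sigma_3$. Your preliminary observation that the centralizer of $M$ is closed under the graded bracket (so only a generating set need be checked) is a tidy organizing refinement of, not a departure from, the same brute-force approach.
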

\begin{proof}
 The proof is similar to Proposition \ref{P:8DM0}, so we omit it.  	
\end{proof}

From Proposition \ref{P:10DMlist}, it can be seen that, contrary to $\g^8$, there exists  $(1,1)$-graded invariant bilinear form $\eta$ in addition to the $\Z2$-Killing form $g$. Their non-vanishing values are listed below:
\begin{align}
	g^{13} &= g^{31}  = g^{46} = g^{64}  = -6, \qquad
	g^{22} = g^{55} = 12,
	\notag \\
	g^{78} &= -g^{87} = g^{9\,10} = -g^{10\,9} = 24
\end{align}
and 
\begin{align}
	\eta^{16} &= \eta^{61} = \eta^{34} = \eta^{43} = -6, \qquad 
	\eta^{25} = \eta^{52} = 12,
	\notag \\
	\eta^{7\, 10} &= \eta^{10\, 7} = \eta^{89} = \eta^{98} = -24.
\end{align}
We also list the non-vanishing values of $g^{-1}$ and $\eta^{-1}:$
\begin{align}
	g_{13} &= g_{31}  = g_{46} = g_{64} = -\frac{1}{6}, \qquad
	g_{22} = g_{55} = \frac{1}{12},
	\notag \\
	g_{78} &= -g_{87} = g_{9\,10} = -g_{10\,9} = -\frac{1}{24}.
\end{align} 
and
\begin{align}
	\eta_{16} &= \eta_{61} = \eta_{34} = \eta_{43} = -\frac{1}{6}, \qquad 
	\eta_{25} = \eta_{52} = \frac{1}{12},
	\notag \\
	\eta_{7\, 10} &= \eta_{10\, 7} = \eta_{89} = \eta_{98} = -\frac{1}{24}.
\end{align}
It follows that $\g^{10}$ has $(0,0)$ and $(1,1)$-graded Casimirs \cite{FaFaJ}:
	\begin{align}
	C_{00} &=\frac{1}{24}(2R^2 + 2\tilde{R}^2-4 \{L^{+}, L^{-} \} -4\{\tilde{L}_{+}, \tilde{L}_{-} \}-[a^{+}, a^{-}] -[\tilde{a}^{+}, \tilde{a}^{-}] ),
	\notag \\
	C_{11} &=\frac{1}{24}(2 \{R, \tilde{R} \}-4 \{L^{+}, \tilde{L}^{-} \} -4\{\tilde{L}^{+}, L^{-} \}- \{a^{+}, \tilde{a}^{-} \}- \{\tilde{a}^{+}, a^{-} \}). 
	\label{11Cas}
\end{align} 

Irreducible representations of $\g^{10}$ are investigated in detail in \cite{NAKA}. 

%
\section{Affine Extensions of $\g^8$ and $ g^{10}$} \label{SEC:AffinExt}

\subsection{Loop algebra and central extension}

Affine $\Z2$-graded Lie superalgebras are defined in a similar way to the ordinary ones. 
First, we consider loop extension of a finite dimensional $\Z2$-graded Lie superalgebra $\g$ by
\begin{equation}
	L(\g) = \g \otimes \mathbb{C}[\lambda,\lambda^{-1}] 
\end{equation} 
where $ \lambda $ is a non-graded indeterminate. 
Denoting the basis of $ \g$ by $ X^a$, a basis of $ L(\g) $ is taken to be $ X^a_m := \lambda^m \otimes X^a, \ m \in \mathbb{Z}, \ \deg(X^a_m) = \bm{a} $ which satisfies
\begin{equation}
	\llbracket X^a_m, X^b_n \rrbracket = \f{ab}{c}\, X^c_{m+n}. 
\end{equation} 

We then consider $\Z2$-graded central extensions of $L(\g)$ defined by 
\begin{equation}
	\llbracket X^a_m, X^b_n \rrbracket = \f{ab}{c}\, X^c_{m+n} + \omega(X_m^a, X_n^b)
\end{equation}
where 
\begin{equation}
	\llbracket X^c_{k},\, \omega(X_m^a, X_n^b) \rrbracket = 0, \quad \forall X^c_k \in L(\g)
\end{equation}
It follows from this definition that 
\begin{enumerate}
	\item $ \deg \omega(X_m^a, X_n^b) = \bm{a} + \bm{b}$
	\item $ \omega $ is bilinear
	\item $ \omega(X_m^a, X_n^b) = -\ph{a}{b} \omega(X_n^b, X_m^a)$
	\item by the Jacobi identity Eq.\eqref{gradedJ}
	 \begin{align}
	 	\ph{a}{c} \omega(X_m^a,\llbracket X_n^b, X_k^c \rrbracket ) &+ \ph{b}{a} \omega(X_n^b,\llbracket X_k^c, X_m^a \rrbracket ) 
	 	\notag \\
	 	&+ \ph{c}{b} \omega(X_k^c,\llbracket X_m^a, X_n^b \rrbracket ) = 0. \label{2Ccond}
	 \end{align}
\end{enumerate}

A cohomological description of the $\Z2$-graded central extensions has been studied in \cite{SchZha}. 
Let $ I := \{  \ \omega(X_m^a, X_n^b) \ | \ a, b = 1, 2, \dots, \dim \g, \ m, n \in \mathbb{Z}  \ \} $ be the set of all central elements. 
It is then shown that there is a bijection between the equivalence class of central extensions of $L(\g)$ by $I$ and $\Z2$-extension of the 2nd cohomology group $ H^2(L(\g),I). $ 
The subset $ Z^2(L(\g),I) \subseteq I $ satisfying Eq.\eqref{2Ccond} is the 2-cocycle. 
The 2-coboundary (trivial extensions) is given by $ B^2(L(\g),I) = \{ \  \omega(X_m^a, X_n^b) = \omega(\llbracket X_m^a, X_n^b \rrbracket )  \ \} $ and $ H^2(L(\g),I) = Z^2(L(\g),I)/B^2(L(\g),I). $ 
Following \cite{SchZha}, we define the action of $ X \in L(\g) $ on $ \omega(A,B)$ by
\begin{equation}
	(X\omega)(A,B) := -(-1)^{ \bm{w}\cdot(\bm{a}+\bm{b}) } \omega(\llbracket X, A \rrbracket, B ) - \ph{w}{b} \omega(A,\llbracket X, B \rrbracket)
\end{equation}
where
\begin{equation}
	\bm{w} := \deg \omega = \bm{a}+ \bm{b}, \qquad \bm{a} := \deg A, \qquad \bm{b} := \deg B.
\end{equation}
Then, one may see that $ I$ is a $ L(\g)$-module under this action. 

To determine possible central extensions of $L(\g^8) $ and $ L(\g^{10})$, we use Proposition 2.1 of \cite{SchZha} which has been proved for the general colour Lie algebras. 
The proposition is rephrased in the present setting as follows:
\begin{proposition} \label{P:SZP21}
		Let $L'$ be a graded subalgebra of $L(\g)$. If the $L'$-module $Z^2(L(\g),I)$ is semi-simple, then for any $ g \in Z^2(L(\g),I)$, there exists an $L'$-invariant $ g' \in Z^2(L(\g),I)$ which is cohomologous to $g$.
\end{proposition}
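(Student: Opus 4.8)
The plan is to run the classical argument by which semisimplicity of the coefficient module trivializes the Lie-algebra action on cohomology, so that every cocycle can be adjusted to an invariant representative within its class. First I would isolate two ingredients. \textbf{(a)} The action of $L'$ sends cocycles to coboundaries: for every $X\in L'$ and every $\omega\in Z^2(L(\g),I)$ one has $X\omega\in B^2(L(\g),I)$. \textbf{(b)} A semisimple $L'$-module $M$ splits as $M = M^{L'}\oplus L'M$, where $M^{L'}$ is the subspace of invariants and $L'M$ is the span of all $X\omega$ with $X\in L'$, $\omega\in M$. Granting these, and noting that $B^2(L(\g),I)\subseteq Z^2(L(\g),I)$ (a coboundary automatically satisfies Eq.\eqref{2Ccond}, since there the cocycle condition is just the graded Jacobi identity carried through the defining linear map), the proof closes quickly: apply (b) to $M = Z^2(L(\g),I)$ and decompose an arbitrary $g = g' + h$ with $g'\in (Z^2)^{L'}$ and $h\in L'\,Z^2$. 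By (a) we have $h\in B^2$, so $g'$ is $L'$-invariant while $g - g' = h$ is a coboundary; that is, $g'$ is cohomologous to $g$, which is exactly the claim.

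Ingredient (b) is purely module-theoretic and I would dispatch it by decomposing the semisimple $M$ into irreducibles: each trivial summand lies in $M^{L'}$ and is killed by $L'$, whereas on a nontrivial irreducible summand $N$ the submodule $L'N$ is nonzero and hence all of $N$, while $N^{L'}=0$. Summing over the summands shows that $M^{L'}$ and $L'M$ are complementary submodules, which is the desired splitting. This step carries no grading subtleties.

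The real content is ingredient (a), which is the $\Z2$-graded incarnation of the identity ``Lie derivative $=$ $d\circ$ contraction on closed cochains''. For fixed $X\in L'$ and $\omega\in Z^2$ I would exhibit $X\omega$ as the coboundary of the $1$-cochain $\psi_X(C) := \epsilon\,\omega(X,C)$, where $\epsilon=\epsilon(\deg X,\deg C)$ is an appropriate sign; concretely one wants $(X\omega)(A,B) = \psi_X(\llbracket A,B\rrbracket)$, which by definition places $X\omega$ in $B^2(L(\g),I)$. To verify the identity one evaluates $\psi_X(\llbracket A,B\rrbracket) = \epsilon\,\omega(X,\llbracket A,B\rrbracket)$ and feeds the triple $(X,A,B)$ into the cocycle condition Eq.\eqref{2Ccond}; this rewrites $\omega(X,\llbracket A,B\rrbracket)$ as a combination of $\omega(A,\llbracket B,X\rrbracket)$ and $\omega(B,\llbracket X,A\rrbracket)$. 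Applying the graded antisymmetry of $\omega$ and of the bracket converts these into $\omega(A,\llbracket X,B\rrbracket)$ and $\omega(\llbracket X,A\rrbracket,B)$, which are precisely the two terms in the definition of $(X\omega)(A,B)$.

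The hard part will be the sign bookkeeping: one must check that the factors $\ph{a}{x}$ and $\ph{b}{a}$ produced by Eq.\eqref{2Ccond}, together with the sign factors introduced by each graded transposition, collapse to exactly the prescribed coefficients $-(-1)^{\Dot{w}{a}+\Dot{w}{b}}$ and $-\ph{w}{b}$ appearing in the action, with $\bm{w}=\deg\omega$ and $\bm{x}=\deg X$. Once the identity is confirmed on homogeneous $A,B$, bilinearity extends it to all of $L(\g)$, which establishes (a) and completes the proposition.
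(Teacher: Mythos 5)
The paper does not prove this proposition at all: it is imported verbatim from Scheunert--Zhang \cite{SchZha} (their Proposition 2.1), so there is no in-text argument to compare against, and your proof is a reconstruction of the cited one. Your reconstruction has the right architecture, and it is in fact the standard one: the splitting $M=M^{L'}\oplus L'M$ for a semisimple module (your ingredient (b)) is correct as you argue it; the inclusion $B^2(L(\g),I)\subseteq Z^2(L(\g),I)$ holds for the reason you give; and the assembly $g=g'+h$ with $g'$ invariant and $h\in L'Z^2\subseteq B^2$ yields the claim. Ingredient (a), the graded Cartan identity, is exactly the key lemma behind the cited result, and your plan for it --- feed the triple $(X,A,B)$ into Eq.\eqref{2Ccond}, then use the graded antisymmetry of $\omega$ and of the bracket --- is the correct computation.

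The one warning concerns the sign verification you defer, and the trap lies in the paper rather than in your plan. Carrying out your computation gives, with $\bm{x}=\deg X$,
\begin{equation*}
\omega(X,\llbracket A,B\rrbracket)=\omega(\llbracket X,A\rrbracket,B)+\ph{a}{x}\,\omega(A,\llbracket X,B\rrbracket),
\end{equation*}
so any $2$-cochain of the form $\psi_X(\llbracket A,B\rrbracket)$ with $\psi_X(C)\propto\omega(X,C)$ has relative coefficient $\ph{a}{x}$ between its two terms. The action as printed in the paper has relative coefficient $(-1)^{\Dot{w}{a}}$ with $\bm{w}=\bm{a}+\bm{b}$; the two agree only when $\bm{a}\cdot(\bm{w}+\bm{x})$ is even, which fails e.g.\ for $\bm{x}=(0,0)$, $\bm{a}=(0,1)$, $\bm{b}=(1,0)$. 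Hence, if you aim, as you wrote, at ``exactly the prescribed coefficients'' $-(-1)^{\Dot{w}{a}+\Dot{w}{b}}$ and $-\ph{w}{b}$, the bookkeeping cannot close. The resolution is that the printed signs are a misprint for signs built from $\deg X$: that is the convention of \cite{SchZha}, and it is also what the paper itself uses in its applications, where $(R_0\omega)(X^a_m,X^b_n)$ is computed with both coefficients equal to $-1$ for every grading of the arguments, consistent with $\deg R_0=(0,0)$. Against the corrected action your identity closes exactly as planned, for instance $(X\omega)(A,B)=\psi_X(\llbracket A,B\rrbracket)$ with $\psi_X(C)=-(-1)^{\Dot{x}{c}}\,\omega(X,C)$, $\bm{c}=\deg C$, which places $X\omega$ in $B^2(L(\g),I)$ and completes the proposition.
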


One may also introduce $\Z2$-graded derivation $ d_{\bm{m}} $ of degree $\bm{m}$ by 
\begin{align}
	d_{\bm{m}} (\llbracket X^a_k, X^b_m \rrbracket) &= 
	\llbracket d_{\bm{m}}(X^a_k), X^b_m \rrbracket + \ph{m}{a} \llbracket X^a_k, d_{\bm{m}}(X^b_m) \rrbracket,
	\notag \\
	\llbracket d_{\bm{m}}, \omega \rrbracket &= 0.
\end{align}

Affine $\Z2$-graded Lie superalgebra $\widehat{\g}$ is defined as the centrally extended $ L(\g) $ and the $\Z2$-graded derivations.  

%
\subsection{$\widehat{\g}^8$ $:$ affine extension of $\g^8$}

\begin{theorem}
	All possible central extensions of $L(\g^8) $ are given by
	\begin{equation}
		\omega(X^a_m, X^b_n) = g^{ab} m \delta_{m+n,0}\, \omega \label{8Dcentext}
	\end{equation}
	where $\omega $ is the unique central element of degree $(0,0)$.
\end{theorem}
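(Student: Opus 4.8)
The plan is to classify all 2-cocycles $\omega$ on $L(\g^8)$ up to coboundary, and then to pin down the explicit form in Eq.\eqref{8Dcentext}. First I would invoke Proposition \ref{P:SZP21} with $L'$ taken to be the (non-graded) subalgebra $\g^8_0 := \g^8 \otimes 1 \subseteq L(\g^8)$, i.e. the zero-mode copy of the finite-dimensional algebra. Since $\g^8$ is finite-dimensional and semisimple-like (it has a non-degenerate Killing form by Proposition \ref{P:8DM0}), the $\g^8_0$-module $Z^2(L(\g^8),I)$ decomposes into a semisimple module, so every cocycle is cohomologous to a $\g^8_0$-invariant one. This reduces the problem to classifying $\g^8_0$-invariant cocycles, which is exactly the setting where invariance forces $\omega(X^a_m,X^b_n)$ to be built from an invariant bilinear form on $\g^8$.

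Next I would exploit the grading in the loop parameter. Invariance under the adjoint action of the zero modes, together with the bilinearity and graded antisymmetry conditions (items 2 and 3 in the list preceding Eq.\eqref{2Ccond}), means that the cocycle must be proportional to an adjoint-invariant bilinear form on $\g^8$ with a coefficient depending only on the mode numbers $m,n$. By Proposition \ref{P:8DM0}, the only invariant bilinear form on $\g^8$ is the $\Z2$-Killing form $g^{ab}$, which is of degree $(0,0)$; hence $\deg\omega = \bm{a}+\bm{b} = (0,0)$ and the central element $\omega$ must be non-graded, which already accounts for the uniqueness of the central element's degree. So I would write $\omega(X^a_m,X^b_n) = g^{ab}\,c(m,n)\,\omega$ for some scalar function $c(m,n)$, reducing everything to determining $c$.

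The remaining step is to fix $c(m,n)$ from the cocycle conditions. The graded antisymmetry (item 3) combined with the symmetry property $g^{ab}=\ph{a}{b}g^{ba}$ forces $c(m,n)=-c(n,m)$, and the degree-counting argument forces $c(m,n)=0$ unless $m+n=0$ (since the central element carries no $\lambda$-dependence, any surviving cocycle must be invariant under the non-graded derivation $d_{(0,0)}=\lambda\,d/d\lambda$). Writing $c(m,n)=\varphi(m)\delta_{m+n,0}$ with $\varphi$ odd, I would substitute into the Jacobi/cocycle identity Eq.\eqref{2Ccond} for three loop generators $X^a_m,X^b_n,X^c_k$. Using the invariance identity Eq.\eqref{P13f} for $g$ to contract the structure constants against $g$, the cocycle condition collapses to the standard functional equation $\varphi(m+n)=\varphi(m)+\varphi(n)$ on the integers, whose only solution is $\varphi(m)=m$ up to the overall scale already absorbed into $\omega$. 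This yields precisely $\omega(X^a_m,X^b_n)=g^{ab}\,m\,\delta_{m+n,0}\,\omega$.

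The main obstacle I anticipate is the careful bookkeeping of the $\Z2$-grading signs when deriving the functional equation from Eq.\eqref{2Ccond}: the phases $\ph{a}{c}$, $\ph{b}{a}$, $\ph{c}{b}$ must be combined with the symmetry relations for $g^{ab}$ and the invariance identity in such a way that the sign factors cancel consistently and leave the clean additive equation for $\varphi$. A secondary subtlety is confirming that no genuinely graded cocycle can arise — that is, ruling out a cocycle valued in a $(1,1)$-, $(0,1)$-, or $(1,0)$-graded central element — which follows cleanly from Proposition \ref{P:8DM0} (the absence of any invariant bilinear form of nonzero degree on $\g^8$), in sharp contrast to the $\g^{10}$ case where the $(1,1)$-graded form $\eta$ produces a second central charge.
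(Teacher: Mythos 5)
Your route is genuinely different from the paper's: the paper takes the \emph{one-dimensional} subalgebra $L'=\{R_0\}$ (whose action on the basis $\omega(X^a_m,X^b_n)$ is diagonal, so semisimplicity of the $L'$-module $Z^2(L(\g^8),I)$ is manifest), then imports the known unique central extensions of the two loop $osp(1|2)$ subalgebras $L(\mathfrak{h}_1),L(\mathfrak{h}_2)$ and finishes by explicit 2-cocycle computations, including showing that $\omega(a^\pm_m,\tilde a^\mp_n)$ is a coboundary absorbed into a redefinition of $\tilde R_m$. You instead take $L'=\g^8\otimes 1$, the full zero-mode algebra, which is the textbook Kac--Moody strategy. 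The genuine gap is precisely there: Proposition \ref{P:SZP21} requires that $Z^2(L(\g^8),I)$ be semisimple as a module over \emph{all} of $\g^8$, and you justify this only by saying $\g^8$ is ``semisimple-like'' because its Killing form is non-degenerate. For $\Z2$-graded Lie superalgebras, as for ordinary Lie superalgebras, there is no Weyl complete-reducibility theorem, and non-degeneracy of the Killing form does not imply semisimplicity of finite-dimensional modules: $sl(1|2)\cong osp(2|2)$ has a non-degenerate Killing form yet admits indecomposable non-irreducible finite-dimensional modules (tensor products of atypical representations), and the module relevant here --- the $(m,n)$-blocks of $Z^2$, built from two copies of the (co)adjoint representation of $\g^8$ --- is exactly the kind of tensor-square module where such indecomposables occur. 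So the reduction ``every cocycle is cohomologous to a $\g^8_0$-invariant one,'' on which everything downstream rests, is unproven, and its natural justification is false as a general principle; the paper's choice of the smaller $L'$ looks like a deliberate way of avoiding exactly this issue, at the price of more case-by-case work.

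A secondary, but fixable, flaw: you obtain the support condition $c(m,n)=0$ for $m+n\neq 0$ by demanding invariance under $d_{00}=\lambda\,d/d\lambda$. Since $d_{00}$ is not an element of $L(\g^8)$, a 2-cocycle has no a priori reason to be $d_{00}$-invariant (coboundaries $\omega(A,B)=\mu(\llbracket A,B\rrbracket)$ concentrated on $m+n=s\neq 0$ are perfectly good cocycles), so this step is unjustified as stated. Fortunately it is unnecessary: once you have the ansatz $\omega(X^a_m,X^b_n)=g^{ab}c(m,n)\,\omega$, graded antisymmetry of $\omega$ together with $g^{ab}=\ph{a}{b}g^{ba}$ gives $c(m,n)=-c(n,m)$, and contracting the cocycle identity with the invariance of $g$ gives $c(m+n,k)+c(n+k,m)+c(k+m,n)=0$ for all $m,n,k$. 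Setting $k=0$ yields $c(p,0)=0$; setting $n=1$ yields the recursion $c(m+1,k)=c(m,k+1)+c(1,m+k)$, hence $c(m,n)=m\,c(1,m+n-1)$; antisymmetry then forces $(m+n)\,c(1,m+n-1)=0$, i.e.\ $c(m,n)=\alpha\,m\,\delta_{m+n,0}$, which is Eq.~\eqref{8Dcentext}. So the mode-number endgame closes cleanly without any appeal to $d_{00}$, and your proof would be complete \emph{if} the semisimplicity of $Z^2(L(\g^8),I)$ over the zero-mode algebra were established --- that remains the missing, load-bearing ingredient; to repair it you must either prove complete reducibility of these specific $\g^8$-modules or fall back on the paper's $L'=\{R_0\}$ argument plus explicit cocycle computations.
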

More explicitly, the non-vanishing relations with the central element are given as follows:
\begin{alignat}{2}
	[R_m,L^{\pm}_n] &= \pm 2 L^{\pm}_{m+n}, & \qquad [L^{+}_m,L^{-}_n] &= -R_{m+n}-2m \delta_{m+n,0}\, \omega, 
	\notag \\ 
	[R_m, R_n] &= 4m \delta_{m+n,0}\,\omega, &
	[R_m, a^{\pm}_n] &= \pm a^{\pm}_{m+n}, 
	\notag \\  
	[ R_m, {\tilde{a}}^{\pm}_n] &= \pm {\tilde{a}}^{\pm}_{m+n}, & 
	[L^{\pm}_m, a^{\mp}_n] &= \mp a^{\pm}_{m+n}, 
	\notag \\ 
	[\tilde{R}_m \tilde{R}_n] &= 4m \delta_{m+n,0}\,\omega, &
	\{\tilde{R}_m, a^{\pm}_n\} &= \pm \tilde{a}^{\pm}_{m+n}, 
	\notag \\
	\{\tilde{R}_m, \tilde{a}^{\pm}_n\} &= \mp a^{\pm}_{m+n},  &
	\{a^{+}_m, a^{-}_n\} &= 2R_{m+n} +8m \delta_{m+n,0}\,\omega, 
	\notag \\
	\{ \tilde{a}^{+}_m, \tilde{a}^{-}_n \} &= 2R_{m+n} + 8m \delta_{m+n,0}\,\omega, &
	[a^{\pm}_m, \tilde{a}^{\mp}_n] &= 2 \tilde{R}_{m+n},
	\notag \\
	[L^{\pm}_m, \tilde{a}^{\mp}_n] &= \pm \tilde{a}^{\pm}_{m+n}, &
	\{a^{\pm}_m, a^{\pm}_n\} &= 4L^{\pm}_{m+n},
	\notag \\
	\{\tilde{a}^{\pm}_m, \tilde{a}^{\pm}_n\} &= -4L^{\pm}_{m+n}.  
	\label{8DrelExplicit}
\end{alignat}
\begin{proof}
	Take $L' = \{  \  R_0 \ \}$ as the one-dimensional subalgebra of $L(\g^8)$. 
The action of $R_0$ on $\omega(X^a_m, X^b_n)$ is 
\begin{equation}
	(R_0 \omega)(X^a_m, X^b_n) = -\omega([R_0,X^a_m], X^b_n) -\omega(X^a_m, [R_0,X^b_n]) =-(x^a + x^b) \omega(X^a_m, X^b_n)
\end{equation}
where $ x^a$ is the eigenvalue of $ ad R_0 $. 
This means that, for a fixed $(a,b,m,n)$,  $ \omega(X^a_m, X^b_n) $ is one-dimensional module of $L'$. Thus, $ Z^2(L(\g),I)$ is semi-simple. 
By Proposition \ref{P:SZP21}, the non-vanishing central extension is possible only for $ x^a+x^b = 0.$ Namely, possible non-vanishing central elements are
\begin{equation}
	\begin{array}{ccc}
		\omega(R_m, R_n), & \quad \omega(\tilde{R}_m, \tilde{R}_n), & \quad \omega(L^+_m, L^-_n), 
		\\
		\omega(a^+_m, a^-_n), & \quad \omega(\tilde{a}^+_m, \tilde{a}^-_n), & \quad \omega(R_m, \tilde{R}_n),
		\\
		\omega(a^+_m, \tilde{a}^-_n), & \quad \omega(\tilde{a}^+_m, a^-_n).
	\end{array}	
\end{equation}
Note that $ L(\g^8) $ has two loop $osp(1|2)$ subalgebras, $ L(\mathfrak{h}_1) :=\{ \ R_m, \; L^{\pm}_m, \; a^{\pm}_m  \ \} $ and $ L(\mathfrak{h}_2) := \{ \ R_m, \; L^{\pm}_m, \; \tilde{a}^{\pm}_m  \ \}. $ 
It is known that the loop $osp(1|2)$ has the unique central extension. 
Since $L(\mathfrak{h}_1)$ and $ L(\mathfrak{h}_2)$ share the bosonic sector, the central extension of them must be common:
\begin{align}
		\omega(R_m,R_n) &= 4m \delta_{m+n,0}\, \omega, \quad \omega(L_m^+,L_n^-) = -2m\delta_{m+n,0}\, \omega,
		\notag \\
		\omega(a^+_m,a^-_n) &= \omega(\tilde{a}^+_m,\tilde{a}^-_n) = 8m \delta_{m+n,0}\, \omega.
\end{align}
 	
 	The  2-cocycle condition for $ \tilde{R}_m, L^{\pm}_m $ reads
\begin{equation}
 		\omega(\tilde{R}_m, [L^+_n,L^-_k]) + \omega(L^+_n, [L^-_k, \tilde{R}_m]) + \omega(L^-_k, [\tilde{R}_m, L_n^+]) = 0.
\end{equation}
It follows immediately that $ \omega(R_m, \tilde{R}_n) = 0. $ 
The 2-cocycle condition for $ \tilde{R}_m, a^+_n, \tilde{a}^-_k$ yields
\begin{equation}
 		2\omega(\tilde{R}_m,\tilde{R}_{n+k}) + \omega(a^+_m,a^-_{k+m}) + \omega(\tilde{a}^-_k,\tilde{a}^+_{m+n}) = 0.
\end{equation}
Set $k =0 $ then we have
\begin{equation}
 		2\omega(\tilde{R}_m,\tilde{R}_{n}) + \omega(a^+_m,a^-_{m}) + \omega(\tilde{a}^-_0,\tilde{a}^+_{m+n}) = 0.	
\end{equation}
It follows that $ \omega(\tilde{R}_m,\tilde{R}_{n}) = 4m \delta_{m+n,0}\,\omega. $

Finally, we show that $  \omega(a^+_m, \tilde{a}^-_n)$ and $  \omega(\tilde{a}^+_m, a^-_n) $ are trivial. 
The 2-cocycle condition for $R_n, a^{\pm}_m$ and $\tilde{a}^{\mp}_k$ yields
\begin{equation}
	\omega(a_m^{\pm}, \tilde{a}_{n+k}^{\mp}) - \omega(a^{\pm}_{m+n}, \tilde{a}^{\mp}_k) = 0.
	\label{2c-1}
\end{equation}
The 2-cocycle condition for  $\tilde{R}_n, a^{\pm}_m$ and $\tilde{a}^{\mp}_k$
\begin{equation}
	\omega(a^{\pm}_m, \tilde{a}^{\mp}_{n+k}) + \omega(\tilde{a}^{\pm}_{m+n}, a^{\mp}_k) = 0
	\label{2c-2}
\end{equation}
and for $ L^{\mp}_n, a^{\pm}_m $ and $ \tilde{a}^{\pm}_k$
\begin{equation}
	\omega(a^{\pm}_m, \tilde{a}^{\mp}_{n+k}) + \omega(\tilde{a}^{\pm}_k, a^{\mp}_{m+n}) = 0.
	\label{2c-3}
\end{equation}
Set $n=0$ in Eq.\eqref{2c-1} and Eq.\eqref{2c-2}, then summing them up we obtain
\begin{equation}
	\omega(a^{\pm}_m, \tilde{a}^{\mp}_k) + \omega(\tilde{a}^{\pm}_m,a^{\mp}_k) = 0. 
	\label{2c-4}
\end{equation}
Using these relations we see that $  \omega(a^+_m, \tilde{a}^-_n)$ and $  \omega(a^-_m,\tilde{a}^+_n) $ are equal:
\begin{equation}
	\omega(a^+_m,\tilde{a}^-_k) \stackrel{\eqref{2c-3}}{=} - \omega(\tilde{a}^+_k, a^-_m) = \omega(a^-_m,\tilde{a}^+_k).
\end{equation}
Furthermore, we see the followings
\begin{equation}
	\omega(a^{\pm}_m, \tilde{a}^{\mp}_{n}) \stackrel{\eqref{2c-2}}{=} -\omega(\tilde{a}_{m+n}^{\pm},a^{\mp}_0) \stackrel{\eqref{2c-4}}{=} \omega(a^{\pm}_{m+n},\tilde{a}^{\mp}_0)
\end{equation}
and
\begin{equation}
	\omega(a^{\pm}_m, \tilde{a}^{\mp}_{n}) \stackrel{\eqref{2c-3}}{=} -\omega(\tilde{a}^{\pm}_0,a^{\mp}_{m+n}) \stackrel{\eqref{2c-4}}{=} \omega(a^{\pm}_0, \tilde{a}^{\mp}_{m+n}). 
\end{equation}
These mean that $ \omega(a^{\pm}_m,\tilde{a}^{\mp}_n)$  is a function of $ m+n$ so that one may set
\begin{equation}
	\omega(a^{\pm}_m,\tilde{a}^{\mp}_n) = 2f(m+n)\, \omega(a,\tilde{a}), \quad
	\omega(a,\tilde{a}) = -\omega(\tilde{a},a).
\end{equation}
It follows that
\begin{equation}
	[a^{\pm}_m, \tilde{a}^{\mp}_n] = 2( \tilde{R}_{m+m} + f(m+n)\, \omega(a,\tilde{a})).
\end{equation}
Therefore, $ \omega(a^{\pm}_m,\tilde{a}^{\mp}_n) $ is absorbed into the redefinition of $\tilde{R}_m.$ 
This completes the proof.  	
\end{proof}	

Now we consider $\Z2$-graded derivations $d_{\bm{a}}$ defined by 
\begin{equation}
	\llbracket d_{\bm{a}}, X_n^b \rrbracket = n Y^c_n, \quad [d_{\bm{a}}, \omega ]  =0 
\end{equation}
where $ Y_n^c $ is a homogeneous element of degree $ \bm{c} = \bm{a} + \bm{b} $. 
This definition means that $d_{00}$ is the element which  resolve the degeneracy of $ ad R_0.$ 

\begin{proposition} \label{Derivation8}
	$d_{00}$ defined by $ [d_{00}, X^a_n] = n X^a_n $ is the derivation. However, $ d_{\bm{a}} $ does not exits for $ \bm{a} = 01, 10, 11.$ 
\end{proposition}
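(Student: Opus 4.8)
The plan is to reduce the existence of the grade-shifting derivation $d_{\bm{a}}$ to the classification of matrices obeying Eq.\eqref{Mcommute}, which is already settled for $\g^8$ by Proposition \ref{P:8DM0}. First I would make the ansatz explicit: since $\llbracket d_{\bm{a}}, X_n^b\rrbracket = n Y_n^c$ with $Y_n^c$ homogeneous of degree $\bm{a}+\bm{b}$ and sitting at the same loop level $n$, I may write $d_{\bm{a}}(X_n^b) = n\,(\Phi_n X^b)_n$ for a degree-$\bm{a}$ linear map $\Phi_n:\g^8\to\g^8$ that a priori depends on the loop index $n$. The two conditions to be imposed are the graded Leibniz rule and $\llbracket d_{\bm{a}}, \omega\rrbracket = 0$.

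Next I would apply the Leibniz rule to a generic bracket $\llbracket X_k^b, X_l^c\rrbracket = \f{bc}{e} X_{k+l}^e + \omega(X_k^b, X_l^c)$. Because $d_{\bm{a}}$ annihilates the central term and multiplies the level-$(k+l)$ part by $k+l$, the left-hand side is $(k+l)\,(\Phi_{k+l}\llbracket X^b, X^c\rrbracket)_{k+l}$, while the right-hand side is $k\,(\llbracket \Phi_k X^b, X^c\rrbracket)_{k+l} + \ph{a}{b}\, l\,(\llbracket X^b, \Phi_l X^c\rrbracket)_{k+l}$ up to a central contribution supported only at $k+l=0$. Comparing the non-central parts and specializing first to $l=0$ (then $k\neq0$) and then to $k=0$ (then $l\neq0$) keeps exactly one prefactor alive and yields, for every nonzero index,
\[
  \llbracket \Phi X^b, X^c\rrbracket = \Phi\llbracket X^b, X^c\rrbracket, \qquad \ph{a}{b}\,\llbracket X^b, \Phi X^c\rrbracket = \Phi\llbracket X^b, X^c\rrbracket .
\]
Each of these is one reading of the graded commutation relation $\llbracket \Phi, ad X\rrbracket = 0$, i.e.\ Eq.\eqref{Mcommute} with $M=\Phi$ and $\deg M = \bm{a}$; in particular the two are equivalent and mutually consistent, so each $\Phi_n$ with $n\neq0$ satisfies Eq.\eqref{Mcommute}.

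With this reduction the conclusion follows from Proposition \ref{P:8DM0}. For $\bm{a} = (0,1),(1,0),(1,1)$ that proposition forces $\Phi_n = 0$ for all $n\neq0$, while $d_{\bm{a}}(X_0^b)=0$ trivially, so no nontrivial derivation of these degrees exists. For $\bm{a} = (0,0)$ each $\Phi_n$ is a scalar $c_n\mathbb{I}$, and the general (unspecialized) Leibniz relation collapses to $(k+l)c_{k+l} = k c_k + l c_l$; setting $g(n)=n c_n$ turns this into Cauchy's equation $g(k+l)=g(k)+g(l)$ on $\mathbb{Z}$, whence $c_n$ is constant and, after normalization, $d_{00}(X_n^b) = n X_n^b$. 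Finally the central term at $k+l=0$ is consistent precisely because the two surviving Leibniz contributions there carry the opposite coefficients $k$ and $-k$ and cancel, confirming that $d_{00}$ is a genuine derivation.

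The main obstacle is organizational rather than computational: one must carefully track the central contribution, which $d_{\bm{a}}$ kills on the left but which re-enters through $\llbracket \Phi_k X^b, X^c\rrbracket$ on the right at $k+l=0$, and one must allow a possible loop-level dependence of $\Phi_n$. The crux is the observation that, once the level prefactors are isolated, the Leibniz constraint reduces to the single finite-dimensional condition Eq.\eqref{Mcommute}; after that, Proposition \ref{P:8DM0} cleanly separates the admissible degree $(0,0)$ from the forbidden ones.
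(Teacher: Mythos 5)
Your proposal is correct, but it takes a genuinely different route from the paper's. The paper argues directly with the graded Jacobi identity and the weight structure of $ad\, R_0$: for instance, from $\{d_{01}, a^+_m\} = \{d_{01},[R_0,a^+_m]\} = [R_0,\{d_{01},a^+_m\}]$ it concludes that $\{d_{01},a^+_m\}$ would have to be a $(0,0)$-graded eigenvector of $ad\, R_0$ with eigenvalue $+1$, and no such element exists; degrees $(1,0)$ and $(1,1)$ are excluded the same way (for $(1,1)$ one uses $L^+_m$ and eigenvalue $+2$), while $d_{00}$ is simply checked against the Jacobi identity. You instead encode $\llbracket d_{\bm{a}}, X^b_n\rrbracket = n(\Phi_n X^b)_n$ and show that the Leibniz rule forces each $\Phi_n$ with $n\neq 0$ to satisfy Eq.\eqref{Mcommute}; your two displayed identities are indeed both equivalent to $\llbracket \Phi_n, ad\, X\rrbracket = 0$, and the specializations $l=0$ and $k=0$ are legitimate since the central term is supported only at vanishing total level. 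Proposition \ref{P:8DM0} then kills all nonzero degrees at once and reduces degree $(0,0)$ to scalars, after which your Cauchy-equation step fixes the level independence and the central contributions at $k+l=0$ cancel as you say. What your route buys: it is uniform in $\bm{a}$ (no case-by-case search for missing weight vectors), it additionally proves uniqueness of $d_{00}$ up to normalization, which the paper does not address, and it conceptually ties derivations to the same matrix classification that governs the invariant bilinear forms --- in particular it explains why $\widehat{\g}^{10}$ does admit a $d_{11}$, since Proposition \ref{P:10DMlist} supplies exactly the degree-$(1,1)$ matrix $M$ whose action (with its $\sigma_3$ signs) reproduces the paper's $d_{11}$. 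What the paper's route buys: it is shorter and self-contained, relying only on the visible weight structure rather than on the brute-force classification behind Proposition \ref{P:8DM0}.
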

\begin{proof}
$d_{\bm{a}}$ is an element of $\Z2$-graded Lie superalgebra, so it must satisfy the $\Z2$-graded Jacobi identity. It is immediate to verify that $d_{00}$ satisfies the $\Z2$-graded Jacobi identity. 
The non-existence of $ d_{01} $ is seen from 
\begin{equation}
	\{ d_{01}, [R_0, a^+_m] \} = \{  d_{01}, a^+_m\}.
\end{equation}	
The lhs is rewritten by the Jacobi identity as 
\begin{equation}
	\mathrm{lhs} = [R_0, \{d_{01}, a^+_m\}].
\end{equation}
Therefore, $ \{  d_{01}, a^+_m\} $ is a $(0,0)$-graded eigenvector of $ ad R_0$ with the eigenvalue $+1.$ However, there is no such element in the algebra. 
The non-existence of $ d_{01}$ is shown similarly. Finally, from $ [d_{11}, [R_0, L^+_m]] $ we see that $ [d_{11}, L_m^+]$ is a $(1,1)$-graded eigenvector of $ ad R_0 $ with the eigenvalue $+2$. However, there is no such element.
\end{proof}

Gathering the results obtained so far, we defined the affine extension of $\g^8 $ by
\begin{equation}
	\widehat{\g}^8 = L(\g^8) \oplus \mathbb{C} \omega \oplus \mathbb{C} d_{00}.
\end{equation}

%
\subsection{$\widehat{\g}^{10}$ $:$ affine extension of $\g^{10}$}

\begin{theorem}
	$L(\g^{10})$ has two $\Z2$-graded central extensions. One is $(0,0)$-graded and the other is $(1,1)$-graded:
	\begin{equation}
		\omega(X^a_m, X^b_n) = (g^{ab} \, \omega_{00} + (-1)^{\bm{a}\cdot (1,1)} \eta^{ab} \omega_{11}) m \delta_{m+n,0}.
	\end{equation}
\end{theorem}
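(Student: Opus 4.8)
The plan is to run the same three-step strategy as in the $\widehat{\g}^8$ theorem, the new feature being an irremovable $(1,1)$-graded level. First I would take $L' = \{\, R_0 \,\}$ and compute $(R_0\omega)(X^a_m, X^b_n) = -(x^a + x^b)\,\omega(X^a_m, X^b_n)$, where $x^a$ is the $ad\, R_0$-eigenvalue read from the left column of \eqref{10Dbasis}. Each $\omega(X^a_m, X^b_n)$ is then a one-dimensional $L'$-module, so $Z^2(L(\g^{10}), I)$ is semi-simple and Proposition \ref{P:SZP21} lets me work with an $R_0$-invariant representative, i.e.\ a cocycle supported on the weight-balanced pairs $x^a + x^b = 0$ (hence on $\delta_{m+n,0}$). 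Listing these pairs from \eqref{10Dbasis} and reading off their degrees $\bm a + \bm b$, I would observe that every one of them carries degree $(0,0)$ or $(1,1)$, never $(0,1)$ or $(1,0)$; this is the cocycle counterpart of the vanishing of the $(0,1)$- and $(1,0)$-graded invariant matrices $M$ in Proposition \ref{P:10DMlist}. The problem thus decouples into a $(0,0)$-graded sector, spanned by $\omega(R_m,R_n)$, $\omega(\tilde R_m,\tilde R_n)$, $\omega(L^\pm_m, L^\mp_n)$, $\omega(\tilde L^\pm_m, \tilde L^\mp_n)$, $\omega(a^+_m, a^-_n)$, $\omega(\tilde a^+_m, \tilde a^-_n)$, and a $(1,1)$-graded sector, spanned by $\omega(R_m,\tilde R_n)$, $\omega(L^\pm_m, \tilde L^\mp_n)$ and $\omega(a^\pm_m, \tilde a^\mp_n)$.

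For the $(0,0)$-graded sector I would repeat the $\widehat{\g}^8$ argument verbatim. The subspaces $\{\, R_m, L^\pm_m, a^\pm_m \,\}$ and $\{\, R_m, L^\pm_m, \tilde a^\pm_m \,\}$ are loop $osp(1|2)$ subalgebras sharing the bosonic core $\{\, R_m, L^\pm_m \,\}$; the uniqueness of the loop $osp(1|2)$ central extension together with the shared core forces their central terms to coincide and to equal a single multiple $\omega_{00}$ of the $\Z2$-Killing form $g$. I would then close the sector by feeding the relations involving the tilded bosons ($[\tilde L^+, \tilde L^-] = -R$, $[\tilde R, \tilde L^\pm] = \pm 2 L^\pm$, $\{\tilde a^+, \tilde a^-\} = 2R$) into the 2-cocycle condition \eqref{2Ccond} to express $\omega(\tilde L^\pm_m, \tilde L^\mp_n)$, $\omega(\tilde R_m, \tilde R_n)$ and $\omega(\tilde a^+_m, \tilde a^-_n)$ in terms of the same $\omega_{00}$ and to verify that the relative coefficients reproduce the entries of $g$. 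This yields the $g^{ab}\,\omega_{00}\, m\,\delta_{m+n,0}$ term of the stated formula.

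The $(1,1)$-graded sector is the crux and is where $\g^{10}$ parts company with $\g^8$. The sharpest illustration is the triple $(\tilde R_m, L^+_n, L^-_k)$: in $\widehat{\g}^8$ it forced $\omega(R_m, \tilde R_n) = 0$ because $\tilde R$ commutes there with $L^\pm$, whereas in $\g^{10}$ one has $[\tilde R, L^\pm] = \pm 2 \tilde L^\pm \neq 0$, so the same identity instead ties $\omega(R_m, \tilde R_n)$ to $\omega(\tilde L^\pm_m, L^\mp_n)$ rather than killing it. More systematically, the decisive new relations are $[a^\pm, \tilde a^\pm] = \mp 4\tilde L^\pm$ and $[L^\pm, \tilde L^\mp] = \mp\tilde R$; I would feed the triples $(R_k, a^+_m, \tilde a^-_n)$ and $(L^-_k, a^+_m, \tilde a^+_n)$ into \eqref{2Ccond} to chain $\omega(R_m,\tilde R_n)$, $\omega(L^\pm_m, \tilde L^\mp_n)$ and $\omega(a^\pm_m, \tilde a^\mp_n)$ together. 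Running these identities, I would first force the $m$-dependence of each $(1,1)$-cocycle into $\Omega^{ab}\, m\,\delta_{m+n,0}$ up to a part depending only on $m+n$, and then fix the relative coefficients. The $(m+n)$-part is a coboundary, absorbed by redefining $\tilde R_k$ and $\tilde L^\pm_k$; the $m$-linear part is not. The main obstacle is to show that this $m\,\delta_{m+n,0}$ residue is nonzero here: unlike $\widehat{\g}^8$, where the chain collapses (compare Proposition \ref{P:8DM0}) and $\omega(a^\pm_m, \tilde a^\mp_n)$ reduces to a pure $(m+n)$-coboundary, in $\g^{10}$ the $\tilde L^\pm$-relations close the chain consistently and leave a free level $\omega_{11}$.

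To finish I would observe that, once every surviving cocycle is brought to the form $\omega(X^a_m, X^b_n) = \Omega^{ab}\, m\,\delta_{m+n,0}$, the remaining content of \eqref{2Ccond} is precisely that $\Omega^{ab}$ be an invariant bilinear form on $\g^{10}$ in the sense of \eqref{P13f}; moreover such a form is never a coboundary, since a coboundary evaluates to the $m$-independent quantity $\f{ab}{c}\,\omega_0(X^c_0)$ on the $\delta_{m+n,0}$ support. By Proposition \ref{P:10DMlist} the space of invariant bilinear forms is two-dimensional, spanned by the $(0,0)$-graded $g$ and the $(1,1)$-graded $\eta$, which gives exactly the two central extensions claimed. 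Finally, the sign $(-1)^{\bm a\cdot(1,1)}$ multiplying $\eta^{ab}$ is forced by the mismatch between the invariance satisfied by $\eta$ and the one demanded of the level: Proposition \ref{P:BLFs}(iii) endows $\eta$ with the signed invariance carrying the factor $(-1)^{\Dot m b}$ with $\bm m = (1,1)$, while \eqref{2Ccond} requires the unsigned invariance obeyed by $g$, and a short computation on the support $\bm a + \bm b + \bm c = \bm m$ shows that multiplying $\eta^{ab}$ by $(-1)^{\bm a\cdot(1,1)}$ interchanges the two. Hence $\Omega^{ab} = g^{ab}\,\omega_{00} + (-1)^{\bm a\cdot(1,1)}\,\eta^{ab}\,\omega_{11}$, as asserted.
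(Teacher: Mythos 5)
Your proposal is correct, and its skeleton coincides with the paper's proof: the same reduction via Proposition \ref{P:SZP21} with $L' = \{R_0\}$ to weight-balanced pairs (your list of surviving cocycles matches the paper's exactly, and your observation that they all carry degree $(0,0)$ or $(1,1)$ is precisely why no fermionic central charge can arise), and the same use of the two loop $osp(1|2)$ subalgebras $L(\mathfrak{h}_1)$, $L(\mathfrak{h}_2)$ to pin down most of the $(0,0)$ sector. Two genuine differences are worth recording. First, for $\omega(\tilde L^+_m,\tilde L^-_n)$ the paper does not run a cocycle computation but invokes a third distinguished subalgebra, the loop $sl(2)$ spanned by $R_m,\tilde L^{\pm}_m$: uniqueness of its central extension plus the shared $R_m$ fixes that value at once; your route through \eqref{2Ccond} also works but costs an extra $(m+n)$-dependence argument of the kind the paper only needed for $\omega(\tilde R_m,\tilde R_n)$. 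Second, and more substantially, your closing argument for the $(1,1)$ sector is different: the paper lists eight explicit 2-cocycle conditions and simply verifies that $(-1)^{\bm a\cdot(1,1)}\eta^{ab}\,m\,\delta_{m+n,0}\,\omega_{11}$ solves them all, whereas you reduce every invariant cocycle to the form $\Omega^{ab}m\delta_{m+n,0}$, observe that \eqref{2Ccond} then says exactly that $\Omega$ is an invariant bilinear form (with the sign twist you correctly identify as converting the $(-1)^{\Dot{m}{b}}$-invariance of Proposition \ref{P:BLFs}(iii) into the invariance demanded by the cocycle identity), and then invoke the classification of Proposition \ref{P:10DMlist} together with the fact that an $m$-linear term on the $\delta_{m+n,0}$ support can never be a coboundary. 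Your version buys a cleaner statement of uniqueness and of the origin of the sign factor --- the paper's verification establishes existence, with uniqueness left implicit in the chains --- at the price of the unexecuted reduction step, which is the technical core and mirrors what the paper carried out in detail only for $\widehat{\g}^8$. One small correction within that step: in the $R_0$-invariant gauge only redefinitions of $\tilde R_k$ (and of the untilded bosons for the $(0,0)$ sector) are available to absorb $(m+n)$-parts; redefining $\tilde L^{\pm}_k$ would shift cocycles such as $\omega(a^{\pm}_m,\tilde a^{\pm}_n)$ that sit at nonzero $ad\,R_0$ weight and have already been gauged to zero, so it cannot be used.
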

It follows that the non-vanishing relations with the central elements are given by
\begin{alignat}{2}
	[R_m, R_n] &= 12m\,\omega_{00}\,\delta_{m+n,0}, & \quad [R_m,\tilde{R}_n] &= 12m\,\omega_{11}\delta_{m+n,0},
	\notag \\[3pt]
	[\tilde{R}_m, \tilde{R}_n] &= 12m\,\omega_{00}\,\delta_{m+n,0}, & 
	[R_m,L^{\pm}_n] &= \pm 2 L^{\pm}_{m+n}, 
	\notag \\[3pt]
	[R_m, \tilde{L}^{\pm}_n] &= \pm 2\tilde{L}^{\pm}_{m+n},
	&
	[R_m, a^{\pm}_n] &= \pm a^{\pm}_{m+n}, 
	\notag \\[3pt]
	[ R_m, {\tilde{a}}^{\pm}_n] &= \pm {\tilde{a}}^{\pm}_{m+n}, 
	&
	[\tilde{R}_m,L^{\pm}_n] &= \pm 2\tilde{L}^{\pm}_{m+n}, 
	\notag \\[3pt]
	[\tilde{R}_m, \tilde{L}^{\pm}_n] &= \pm 2L^{\pm}_{m+n},
	&
	\{\tilde{R}_m, a^{\pm}_n \} &= \tilde{a}^{\pm}_{m+n}, 
	\notag \\[3pt]
	\{\tilde{R}_m, \tilde{a}^{\pm}_n\} &= a^{\pm}_{m+n}, 
	&
	[L^{+}_m,L^{-}_n] &= -R_{m+n} -6m\, \omega_{00}\, \delta_{m+n,0}, 
	\notag \\[3pt]
	[L^{\pm}_m, \tilde{L}^{\mp}_n] &= \mp \tilde{R}_{m+n}-6m\, \omega_{11} \delta_{m+n,0}, 
	&
	[\tilde{L}^{+}_m, \tilde{L}^{-}_n] &= -R_{m+n} - 6m\, \omega_{00}\, \delta_{m+n,0},  
	\notag \\[3pt]
	[L^{\pm}_m, \tilde{a}^{\mp}_n] &= \pm \tilde{a}^{\pm}_{m+n},& [L^{\pm}_m, a^{\mp}_n] &= \mp a^{\pm}_{m+n}, 
	\notag \\[3pt]
	\{ \tilde{L}^{\pm}_m, a^{\mp}_n\} &= -\tilde{a}^{\pm}_{m+n}, & \{\tilde{L}^{\pm}_m, \tilde{a}^{\mp}_n\} &= a^{\pm}_{m+n},
	\notag \\[3pt]
	\{a^{+}_m, a^{-}_n\} &= 2R_{m+n}+24m\omega_{00}\,\delta_{m+n,0}, & [a^{\pm}_m, \tilde{a}^{\mp}_n] &= \pm 2 \tilde{R}_{m+n}+24m\omega_{11}\delta_{m+n,0},
	\notag \\[3pt]
	\{ \tilde{a}^{+}_m, \tilde{a}^{-}_n \} &= 2R_{m+n}+24m\omega_{00}\,\delta_{m+n,0},  
	&
	[a^{\pm}_m, \tilde{a}^{\pm}_n] &= \mp 4 \tilde{L}^{\pm}_{m+n}, 
	\notag \\[3pt]
	\{a^{\pm}_m, a^{\pm}_n\} &= 4L^{\pm}_{m+n}, &	\{\tilde{a}^{\pm}_m, \tilde{a}^{\pm}_n\} &= -4L^{\pm}_{m+n}. \label{10Drels}
\end{alignat}
\begin{proof}
 Proof is similar to the case of $L(\g^8).$ 
We take $L' = \{  \  R_0 \ \}$ as the one-dimensional subalgebra. Then by Proposition \ref{P:SZP21},  the possible non-vanishing central elements are given by
\begin{equation}
	\begin{array}{cccc}
		\omega(R_m, R_n), & \quad \omega(\tilde{R}_m, \tilde{R}_n), & \quad \omega(L^+_m, L^-_n), & \quad \omega(\tilde{L}^+_m, \tilde{L}^-_n),
		\\
		\omega(a^+_m, a^-_n), & \quad \omega(\tilde{a}^+_m, \tilde{a}^-_n), & \quad \omega(R_m, \tilde{R}_n), & \omega(L^+_m, \tilde{L}^-_n),
		\\
		\omega(\tilde{L}^+_m, L^-_n), &   \omega(a^+_m, \tilde{a}^-_n), & \quad \omega(\tilde{a}^+_m, a^-_n).
	\end{array}	
\end{equation}

The loop $osp(1|2)$ subalgebras $ L(\mathfrak{h}_1) $ and $ L(\mathfrak{h}_2) $ (same definition as $L(\g^8)$) fix the following $(0,0)$-graded centers
\begin{align}
	\omega(R_m,R_n) &= 12m \delta_{m+n,0}\, \omega_{00}, \qquad \omega(L_m^+,L_n^-) = -6m\delta_{m+n,0}\, \omega_{00}, 
	\notag \\
	\omega(a^+_m,a^-_n) &=  \omega(\tilde{a}^+_m,\tilde{a}^-_n) = 24m \delta_{m+n,0}\, \omega_{00}.
	\label{10center1}
\end{align} 
The algebra $L(\g^{10})$ also has the loop $ sl(2) $ subalgebra spanned by $ R_m, \tilde{L}^{\pm}_m.$ 
It is known that $L(sl(2))$  has the unique central extension which should also exists in $L(\g^{10}).  $ The loop algebras $ L(sl(2))$,  $ L(\mathfrak{h}_1) $ and $ L(\mathfrak{h}_2) $ share $ R_m$, so the consistency requires  
\begin{equation}
	\omega(\tilde{L}_m^+,\tilde{L}_n^-) = -6m\delta_{m+n,0}\, \omega_{00}.
\end{equation}

The 2-cocycle condition for $ \tilde{R}_m, a^+_n, \tilde{a}^-_k$ yields
\begin{equation}
	2\omega(\tilde{R}_m, \tilde{R}_{n+k}) + \omega(a^+_n, a^-_{m+k}) + \omega(\tilde{a}^-_k, \tilde{a}^+_{m+n}) = 0. 
\end{equation}
Setting $ k = 0, $ we obtain that $ \omega(\tilde{R}_m,\tilde{R}_n) = 12m \delta_{m+n,0}\,\omega_{00}.$ 
 
We now turn to the $(1,1)$-graded centers, $ \omega(R_m,\tilde{R}_n), \omega(L^{\pm}_m, \tilde{L}^{\mp}_n) $ and $\omega(a^{\pm}_m, \tilde{a}^{\mp}_n). $ 
To determine these, we need to solve the following 2-cocycle conditions:

\begin{align}
	R_n, a_m^{\pm}, \tilde{a}_k^{\mp} \quad &\Rightarrow \quad 
	\omega(a_m^{\pm}, \tilde{a}^{\mp}_{n+k}) + 2 \omega(R_n, \tilde{R}_{k+m}) + \omega(\tilde{a}^{\mp}_k,a^{\pm}_{m+n}) = 0,
	\notag \\
	\tilde{R}_n, a^{\pm}_{m}, a^{\mp}_k \quad &\Rightarrow \quad 
	\omega(a^{\pm}_m, \tilde{a}^{\mp}_{n+k}) + 2\omega(\tilde{R}_n, R_{m+k}) + \omega(a^{\mp}_k, \tilde{a}^{\pm}_{m+n}) = 0,
	\notag \\
	L^{\mp}_n, a^{\pm}_m, \tilde{a}^{\pm}_k \quad &\Rightarrow \quad  
	\omega(a^{\pm}_m, \tilde{a}^{\mp}_{n+k}) - 4 \omega(L^{\mp}_n, \tilde{L}^{\pm}_{k+m}) + \omega(\tilde{a}^{\pm}_k, a^{\mp}_{m+n}) = 0,
	\notag \\
	\tilde{L}^{\mp}_n, a^{\pm}_m, a^{\pm}_k \quad &\Rightarrow \quad 
	\omega(a^{\pm}_m, \tilde{a}^{\mp}_{n+k}) -4\omega(\tilde{L}^{\mp}_n, L^{\pm}_{k+m}) + \omega(a^{\pm}_k, \tilde{a}^{\mp}_{m+n}) = 0,
	\notag \\
	R_m, L_n^+, \tilde{L}_k^- \quad &\Rightarrow \quad 
	\omega(R_m, \tilde{R}_{n+k}) - 2\omega(L_n^+, \tilde{L}^-_{k+m}) - 2\omega(\tilde{L}^-_k, L^+_{m+n}) = 0,
	\notag \\
	\tilde{R}_m, L^+_n, L^-_k \quad &\Rightarrow \quad 
	\omega(R_{n+k}, \tilde{R}_m) + 2 \omega(L_n^+, \tilde{L}^-_{k+m}) + 2\omega(L^-_k, \tilde{L}^+_{m+n}) = 0,
	\notag \\
	\tilde{R}_m, \tilde{L}^+_n, \tilde{L}^-_k \quad &\Rightarrow \quad 
	\omega(R_{n+k}, \tilde{R}_m) + 2\omega(\tilde{L}^+_n, L^-_{k+m}) + 2\omega(\tilde{L}^-_k, L^+_{m+n}) = 0,
	\notag \\
	R_m, L_n^-, \tilde{L}_k^+ \quad &\Rightarrow \quad 
	\omega(R_m, \tilde{R}_{n+k}) + 2\omega(\tilde{L}^+_{k+m}, L^-_n) + 2\omega(L^-_{m+n}, \tilde{L}^+_k) = 0.
\end{align}
It is not difficult to see the following solves all the conditions:
\begin{equation}
	\omega(X^a_m,X^b_n) = (-1)^{\bm{a}\cdot (1,1)} \eta^{ab} m \delta_{m+n,0}\,\omega_{11}.
\end{equation}
This completes the proof. 
\end{proof}

Let us now consider the $\Z2$-graded derivations. 
Like the invariant bilinear forms and the central extensions, we have a  derivation with non-trivial $\Z2$-degree.
\begin{proposition}
	\begin{enumerate}
		\item $d_{01}$ and $ d_{10}$ do not exist.
		\item $ [d_{00}, X_n^a] = n X_n^a $ is the derivation.
		\item $ d_{11} $ is defined by the followings:
		\begin{alignat}{2}
			[d_{11}, R_m] &= m \tilde{R}_m, & \qquad [d_{11}, L_m^{\pm}] &= m \tilde{L}_m^{\pm},
			\notag \\
			[d_{11}, \tilde{R}_m] &= m R_m, & [d_{11}, \tilde{L}_m^{\pm}] &= m L_m^{\pm},
			\notag \\
			\{d_{11}, a^{\pm}_m\} &= \pm m \tilde{a}^{\pm}_m, & \{ d_{11}, \tilde{a}^{\pm}_m\} &= \pm m a^{\pm}_m,
			\notag \\
			[d_{11}, \omega_{00}] &= 0, & [d_{11}, \omega_{11}] &= 0.
		\end{alignat}
	\end{enumerate}
\end{proposition}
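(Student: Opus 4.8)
The plan is to mirror the proof of Proposition \ref{Derivation8}, exploiting the fact that every relation in \eqref{10Drels} is homogeneous under $ad R_0$, and that the defining requirement $\llbracket d_{\bm a}, X_n^b\rrbracket = n Y_n^c$ with $\bm c = \bm a + \bm b$ forces each image to be an $ad R_0$-eigenvector whose eigenvalue and $\Z2$-degree are prescribed. First I would record the governing constraint. Since $R = X^2$ carries index $0$, the definition gives $\llbracket d_{\bm a}, R_0\rrbracket = 0$; and because $\deg R = (0,0)$ the graded Leibniz rule collapses to
\[
\llbracket d_{\bm a}, \llbracket R_0, X_n^b\rrbracket\rrbracket = \llbracket R_0, \llbracket d_{\bm a}, X_n^b\rrbracket\rrbracket .
\]
As $\llbracket R_0, X_n^b\rrbracket = x^b X_n^b$ with $x^b$ the $ad R_0$-eigenvalue of $X^b$, this shows $\llbracket d_{\bm a}, X_n^b\rrbracket$ must be a homogeneous element of degree $\bm a+\bm b$ and $ad R_0$-eigenvalue $x^b$. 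The whole proposition then reduces to reading off the degree/eigenvalue table \eqref{10Dbasis}.

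For part (1), take $\bm a = (0,1)$ and apply the constraint to $a^+$ (eigenvalue $+1$): the image would have to be a $(0,0)$-graded eigenvector of eigenvalue $+1$, but the $(0,0)$ sector consists only of $L^\pm, R$ with eigenvalues $\pm 2, 0$. Hence no nonzero image is admissible and $d_{01}$ cannot act as a degree-resolving derivation; the same computation with $a^+$ for $\bm a = (1,0)$ lands in the $(1,1)$ sector at eigenvalue $+1$, again empty. This establishes non-existence exactly as in Proposition \ref{Derivation8}. Part (2) is immediate: $d_{00}$ acts as $\lambda\,\partial_\lambda$, so it preserves both the $\Z2$-degree and the $m$-grading, the graded Leibniz rule reduces to the ordinary product rule, and $\llbracket d_{00}, \omega_{00}\rrbracket = \llbracket d_{00}, \omega_{11}\rrbracket = 0$ because the central terms carry the overall factor $m\delta_{m+n,0}$, which is annihilated on its support.

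Part (3) is the substantive one. Here the same eigenvalue bookkeeping, applied to $\bm a = (1,1)$, no longer forces triviality: the target degrees $(1,1)$ and $(1,0)$ now contain $\tilde L^\pm, \tilde R$ and $\tilde a^\pm$ at exactly the required eigenvalues, so the constraint pins down the candidate map up to one scalar per generator, namely the ``tilde'' involution displayed in the statement. This is the structural reason $d_{11}$ exists for $\g^{10}$ but not for $\g^8$, and it is the loop counterpart of the existence of the $(1,1)$-graded $M$ (hence of $\eta$) in Proposition \ref{P:10DMlist}. I would fix the relative scalars and signs, in particular $\{d_{11}, a^\pm_m\} = \pm m\,\tilde a^\pm_m$, by imposing the graded Leibniz rule on a few discriminating brackets such as $\{a_m^+, a_n^-\}$, and then verify it on the remaining relations of \eqref{10Drels}. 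The key consistency is that $d_{11}$ interchanges the two families of central terms correctly: applied to a bracket carrying an $\omega_{00}$ term, the Leibniz rule produces central contributions proportional to $mn\,\omega_{11}\delta_{m+n,0}$ (and vice versa), and these vanish on the support of $\delta_{m+n,0}$, either by explicit cancellation in pairs or because they carry a factor $m+n$. Combined with $\llbracket d_{11}, \omega_{00}\rrbracket = \llbracket d_{11}, \omega_{11}\rrbracket = 0$, this makes both sides of each Leibniz identity agree.

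The main obstacle is entirely in part (3): the disciplined bookkeeping of the graded signs $\ph{(1,1)}{b}$ and of the commutator/anticommutator alternation when running the Leibniz rule over all of \eqref{10Drels}, together with the mutual cancellation of the $\omega_{00}$- and $\omega_{11}$-central contributions. There is no conceptual difficulty beyond the existence of the tilde map, but the cross-checking of the two central charges under $d_{11}$ is where an error would most likely hide, so I would organise the computation sector by sector (the $sl(2)$-type brackets, the purely fermionic brackets, and the mixed ones) to keep the sign tracking manageable.
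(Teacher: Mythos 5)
Your proposal is correct and follows essentially the same route as the paper, whose own proof is simply ``the same as for Proposition~\ref{Derivation8}'': the $ad\,R_0$-eigenvalue and $\Z2$-degree bookkeeping rules out $d_{01}$ and $d_{10}$, while direct verification of the graded Leibniz rule (including the pairwise cancellation of the $\omega_{00}$/$\omega_{11}$ central contributions on the support of $\delta_{m+n,0}$) establishes $d_{00}$ and $d_{11}$. In fact your write-up supplies the details that the paper omits, and they check out.
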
	
\begin{proof}
 The proof is the same as for Proposition \ref{Derivation8}, so we omit it. 	
\end{proof}

%
\section{$\Z2$-Graded Virasoro Algebras by Sugawara Construction} \label{SEC:Suga}

Thanks to the existence of non-degenerate invariant bilinear forms, we are able to discuss $\Z2$-graded extensions of the Virasoro algebras using $\widehat{\g}^8 $ and $ \widehat{\g}^{10}$ by Sugawara construction. 
In particular, we will have $(1,1)$-graded Virasoro current from the $(1,1)$-graded bilinear form.  

In this section, we will mainly follow  \cite{Schot} with appropriate $\Z2$-graded modifications. 
We introduce a current
\begin{equation}
	X^a(z) := \sum_{n \in \mathbb{Z}} X_n^a z^{-n-1}, \quad z \in \mathbb{C}
\end{equation}
and the formal delta function
\begin{equation}
	\delta(z-w) = \sum_{n \in \mathbb{Z}} z^{n-1} w^{-n} = \sum_{n \in \mathbb{Z}} z^n w^{-n-1} = \sum_{n \in \mathbb{Z}} z^{-n-1} w^n.
\end{equation}
The OPE of two currents is singular parts of the general Lie bracket:
\begin{align}
	\llbracket A(z),  B(w) \rrbracket &= \sum_{j=0}^{N-1} \frac{C^j(w)}{j!} \partial^j_w \delta(z-w)
	\notag \\
	&\ \Longleftrightarrow \ A(z) B(w) \sim \sum_{j=0}^{N-1} \frac{C^j(w)}{(z-w)^{j+1}}.
\end{align} 
The normal ordering of currents is defined by
\begin{equation}
	\NO{A(z) B(w)} = A(z)_+ B(w) + \ph{a}{b} B(w) A(z)_-
\end{equation}
where $ \bm{a} = \deg (A(z)), \bm{b} = \deg(B(w))$ and 
\begin{equation}
	A(z)_- := \sum_{n \geq 0} A_n z^{-n-1}, \quad
	A(z)_+ := \sum_{n < 0} A_n z^{-n-1}.
\end{equation}

\subsection{Witt algebra from $\widehat{\g}^8$}

In terms of currents, Eq.\eqref{8DrelExplicit} with the ordered basis Eq.\eqref{8Dorderedbasis} is rewritten as
\begin{equation}
	\llbracket X^a(z), X^b(w) \rrbracket = \sum_c \f{ab}{c}\, X^c(w) \delta(z-w) + g^{ab} \omega \partial_w \delta(z-w)
\end{equation} 
which is equivalent to the following OPE:
\begin{equation}
	X^a(z) X^b(w) \sim \frac{\sum_c\f{ab}{c}X^c(w)}{z-w} + \frac{g^{ab} \omega}{(z-w)^2}.
\end{equation}
This is the OPE of spin 1 current, so we define
	\begin{equation}
	L(z):= \frac{1}{2\omega+1} \sum_{a,b} g_{ab} \NO{X^a X^b}(z) \label{Su8D}
\end{equation}
and expect $L(z)$ is the Virasoro current. 
However, it turns out that this is not the case. 
One may see by the straightforward computation that
\begin{equation}
	\llbracket X^a(z), L(w) \rrbracket = X^a(w) \partial_w \delta(z-w)
\end{equation}
which is equivalent to
\begin{equation}
	X^a(z) L(w) \sim \frac{X^a(w)}{(z-w)^2}.
\end{equation}
Similarly, we have
\begin{equation}
   L(z) X^a(w) \sim \frac{X^a(w)}{(z-w)^2} + \frac{\partial_w X^a(w)}{z-w}.
\end{equation}
Using these one may see that
\begin{align}
	L(z) L(w) \sim \frac{2L(w)}{(z-w)^2} + \frac{\partial_w L(w)}{z-w} + \frac{\omega}{2\omega +1} \frac{\sum_{a,b} g_{ab} g^{ab}}{(z-w)^4}.
\end{align}
One may verify that the bilinear form satisfies
\begin{equation}
	\sum_{a,b} g_{ab} g^{ab} = 0.
\end{equation}
Thus, we have shown the following:
\begin{proposition}
	Eq.\eqref{Su8D} realizes the Witt algebra. 
\end{proposition}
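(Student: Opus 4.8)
The plan is to extract the mode algebra of the Sugawara current \eqref{Su8D} from the operator product expansion already established, and then to observe that the only potential central term is switched off by a dimensional cancellation.

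First I would fix the mode expansion $L(z) = \sum_{n\in\mathbb{Z}} L_n\, z^{-n-2}$, appropriate to a field of conformal weight $2$ as shown by the double-pole term $2L(w)/(z-w)^2$ in its self-OPE. The standard dictionary sending a pole of order $j+1$ with numerator $C^{j}(w)$ to the $j$-th descendant in the mode commutator then turns the computed $L(z)L(w)$ OPE into
\begin{equation}
	[L_m, L_n] = (m-n)\,L_{m+n} + \frac{\omega}{6(2\omega+1)}\Big(\sum_{a,b} g_{ab}\,g^{ab}\Big)\, m(m^2-1)\,\delta_{m+n,0}.
\end{equation}
The $(z-w)^{-2}$ and $(z-w)^{-1}$ terms reproduce the Witt structure $(m-n)L_{m+n}$, while every central contribution is funnelled through the single scalar $\sum_{a,b} g_{ab}\,g^{ab}$ coming from the fourth-order pole.

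The one genuine computation is to evaluate this scalar from the explicit tables of the $\Z2$-Killing form $g^{ab}$ and its inverse $g_{ab}$. Writing $G=(g^{ab})$, one has $\sum_{a,b} g_{ab}\,g^{ab} = \mathrm{tr}(G^{-1}G^{\mathsf{T}})$, which by the graded symmetry $g^{ab} = \ph{a}{b} g^{ba}$ is the \emph{superdimension} of $\g^8$: the four even generators $L^\pm, R, \tilde{R}$ form a symmetric block contributing $+4$, whereas the four odd generators $a^\pm, \tilde{a}^\pm$ form an antisymmetric block contributing $-4$. Hence $\sum_{a,b} g_{ab}\,g^{ab} = 4-4 = 0$, the cancellation being exactly the balance between even and odd sectors enforced by the supertrace in \eqref{Killing-def}; the extra even generator $\tilde{R}$ of $\g^8$, absent from ordinary $osp(1|2)$, is precisely what restores equal dimensions.

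With this scalar vanishing the fourth-order pole disappears and the commutator collapses to $[L_m, L_n] = (m-n)L_{m+n}$, which is by definition the Witt algebra; equivalently the Sugawara central charge $c \propto \sum_{a,b} g_{ab}\,g^{ab}$ is zero. The only delicate point is the $\Z2$-graded Wick and normal-ordering bookkeeping — the sign factors $\ph{a}{b}$ in $\NO{A(z)B(w)}$ — that produced the $L(z)L(w)$ OPE in the first place, but since that OPE is already in hand the residual argument is just the finite cancellation above, so I anticipate no real obstacle.
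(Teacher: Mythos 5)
Your proposal is correct and matches the paper's proof: both read off the mode algebra from the already-computed $L(z)L(w)$ OPE and kill the only possible central term by showing $\sum_{a,b} g_{ab}\,g^{ab}=0$. The sole difference is that the paper verifies this vanishing by direct summation over the tabulated entries of $g$ and $g^{-1}$, whereas you identify the scalar as $\mathrm{tr}(G^{-1}G^{\mathsf{T}})$, i.e.\ the superdimension of $\g^8$ --- four bosonic generators of degrees $(0,0),(1,1)$ against four fermionic ones of degrees $(0,1),(1,0)$ --- which gives a cleaner conceptual reason for the cancellation.
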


\subsection{$\Z2$-graded Virasoro algebra form $ \widehat{\g}^{10}$}

Non-trivial extension of the Virasoro algebra will be obtained from $ \widehat{\g}^{10}$ because of the existence of $(1,1)$-graded central extension. 
In terms of currents, Eq.\eqref{10Drels} with the ordered basis Eq.\eqref{Ordered10D} is rewritten as
\begin{align}
	\llbracket X^a(z), X^b(w) \rrbracket &= \sum_c \f{ab}{c}\, X^c(w) \delta(z-w) 
	\notag \\
	&+(g^{ab} \, \omega_{00} 
	 + (-1)^{\bm{a}\cdot (1,1)} \eta^{ab} \omega_{11}) \partial_w \delta(z-w)
\end{align} 
which gives
\begin{align}
	X^a(z) X^b(w) &\sim \frac{\sum_c\f{ab}{c}X^c(w)}{z-w} 
	+ \frac{g^{ab} \, \omega_{00} + (-1)^{\bm{a}\cdot (1,1)} \eta^{ab} \omega_{11}}{(z-w)^2}.
\end{align}
Define
\begin{align}
	L^{00}_{(z)} &:= \frac{1}{2\omega_{00}+1} \sum g_{ab} \NO{X^a X^b}(z),
	\notag \\[3pt]
	L^{11}_{(z)} &:= \frac{1}{2\omega_{00}+1} \sum \eta_{ab} \NO{X^a X^b}(z),	
	\label{10DSugawara}
\end{align}
then after some computations we see that
\begin{align}
	X^a(z) L^{00}(w) &\sim \frac{X^a(w)}{(z-w)^2}  + \frac{\lambda_{11} \epsilon(X^a) \tilde{X}^a(w)}{(z-w)^2},
	\notag \\
	X^a(z) L^{11}(w) &\sim (-1)^{\bm{a}\cdot (1,1)} \left( \frac{\epsilon(X^a) \tilde{X}^a(w)}{(z-w)^2} + \frac{\lambda_{11} X^a(w)}{(z-w)^2} \right) \label{OPEXL}
\end{align}
where $ \tilde{X}^a $ denotes the current obtained from $ X^a(w) $ by shifting the $\Z2$-degree by $(1,1)$, for instance, $ R(w) \to \tilde{R}(w). $ 
We introduced the sign function $ \epsilon $ defined by
\begin{equation}
	\epsilon(X^a) = 
	\begin{cases}
		-1 & X^a = a^-,\; \tilde{a}^-
		\\
		+1 & \text{otherwise} 
	\end{cases}
\end{equation} 
and a $(1,1)$-graded constant
\begin{equation}
	\lambda_{11} = \frac{2\omega_{11}}{2\omega_{00}+1}.
\end{equation}

It may be helpful to give the OPE \eqref{OPEXL} in terms of the basis \eqref{10Dbasis}:
\begin{align}
	R(z) L^{00}(w) &\sim  \frac{R(w)}{(z-w)^2} + \lambda_{11} \frac{\tilde{R}(w)}{(z-w)^2},
	\\
	L^{\pm}(z) L^{00}(w) &\sim  \frac{L^{\pm}(w)}{(z-w)^2} + \lambda_{11} \frac{\tilde{L}^{\pm}(w)}{(z-w)^2},
	\\
	\tilde{R}(z) L^{00}(w) &\sim  \frac{\tilde{R}(w)}{(z-w)^2} + \lambda_{11} \frac{R(w)}{(z-w)^2},
	\\
	a^{\pm}(z) L^{00}(w) &\sim \frac{a^{\pm}(w)}{(z-w)^2} \pm \lambda_{11} \frac{\tilde{a}^{\pm}(w)}{(z-w)^2},
	\\
	\tilde{a}^{\pm}(z) L^{00}(w) &\sim \frac{\tilde{a}^{\pm}(w)}{(z-w)^2} \pm \lambda_{11} \frac{a^{\pm}(w)}{(z-w)^2}
\end{align}
and 
\begin{align}
	R(z) L^{11}(w) &\sim  \frac{\tilde{R}(w)}{(z-w)^2} + \lambda_{11} \frac{R(w)}{(z-w)^2},
	\\
	L^{\pm}(z) L^{11}(w) &\sim \frac{\tilde{L}^{\pm}(w)}{(z-w)^2} + \lambda_{11} \frac{L^{\pm}(w)}{(z-w)^2},
	\\
	\tilde{R}(z) L^{11}(w) &\sim  \frac{R(w)}{(z-w)^2} + \lambda_{11} \frac{\tilde{R}(w)}{(z-w)^2},
	\\
	a^{\pm}(z) L^{11}(w) &\sim \mp \frac{\tilde{a}^{\pm}(w)}{(z-w)^2} - \lambda_{11} \frac{a^{\pm}(w)}{(z-w)^2},
	\\
	\tilde{a}^{\pm}(z) L^{11}(w) &\sim \mp \frac{a^{\pm}(w)}{(z-w)^2} - \lambda_{11} \frac{\tilde{a}^{\pm}(w)}{(z-w)^2}.
\end{align}
After a lengthy computation using these relations,  we obtain the OPE for Virasoro currents
\begin{align}
	& L^{00}(z) L^{00}(w) \sim L^{11}(z) L^{11}(w) 
	\notag \\
	& \quad \sim  \frac{2(L^{00}(w) + \lambda_{11} L^{11}(w))}{(z-w)^2}
	+  \frac{\partial_w(L^{00}(w) + \lambda_{11}  L^{11}(w))}{z-w} 
	 + \frac{c_{00}/2}{(z-w)^4},
	\notag \\
	& L^{00}(z) L^{11}(w) \sim \frac{2(L^{11}(w) + \lambda_{11} L^{00}(w))}{(z-w)^2}
	+  \frac{\partial_w(L^{11}(w) + \lambda_{11}  L^{00}(w))}{z-w}
	\notag \\
	& \quad + \frac{c_{11}/2}{(z-w)^4}
\end{align}
where
\begin{equation}
	c_{00} := \frac{1}{3} \left( \frac{2\omega_{00}}{2\omega_{00} +1}  + \lambda_{11}^2 \right),
	\qquad
	c_{11} := \frac{\lambda_{11}}{3} \left( \frac{2\omega_{00}}{2\omega_{00} +1} +1 \right).
\end{equation}
These relations define a $\Z2$-graded generalization of the Virasoro algebra with two central extensions, one is not graded and the other is $(1,1)$-graded. 
The central elements are determined by those of $\widehat{\g}^{10}. $ 

 We introduce the modes $ L^{\vec{a}}_m$ from the Laurent expansion
\begin{equation}
	L^{\vec{a}}(z) = \sum_{n \in \mathbb{Z}} L^{\vec{a}}_m z^{-m-2}, \quad \vec{a} = 00, 11
\end{equation}
it is straightforward to derive the commutation relations of the $\Z2$-graded Virasoro algebra.  
In summary, we have established the following:
\begin{proposition}
Eq.\eqref{10DSugawara} realizes the infinite dimensional algebra defined by the relations
\begin{align}
	[L_m^{00}, L_n^{00}] &= [L_m^{11}, L_n^{11}]
	\notag \\
	&= (m-n) ( L_{m+n}^{00}  + \lambda_{11} L_{m+n}^{11}) + \frac{c_{00}}{12}  m (m^2-1) \delta_{m+n,0},
	\notag \\[3pt]
	[L_m^{00}, L_n^{11}] 
	&= (m-n) (L_{m+n}^{11}  +  \lambda_{11} L_{m+n}^{00}) + \frac{c_{11}}{12}  m (m^2-1) \delta_{m+n,0}.	
\end{align}
\end{proposition}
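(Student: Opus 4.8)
The plan is to read off the mode commutators directly from the three operator product expansions established just above, using the dictionary between a singular OPE and the general Lie bracket of currents recorded at the start of this section. First I would note a grading simplification: $L^{00}(z)$ has degree $(0,0)$ and $L^{11}(z)$ has degree $(1,1)$, and since $\ph{(1,1)}{(1,1)} = +1$, every general bracket $\llbracket L^{\vec a}(z), L^{\vec b}(w)\rrbracket$ occurring here reduces to an ordinary commutator --- for $L^{00}$ trivially, and for two copies of $L^{11}$ because the sign is $+1$. Hence no grading sign survives into the mode relations, which is why the statement is phrased with $[\,\cdot\,,\,\cdot\,]$, and the three OPEs may be converted one at a time by the ungraded residue procedure.

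Concretely, using the weight-two expansion $L^{\vec a}(z) = \sum_k L^{\vec a}_k z^{-k-2}$, I would write each mode commutator as the iterated contour integral
\begin{equation}
  [L^{\vec a}_m, L^{\vec b}_n] = \oint_0 \frac{dw}{2\pi i}\, w^{n+1} \oint_w \frac{dz}{2\pi i}\, z^{m+1}\, L^{\vec a}(z) L^{\vec b}(w),
\end{equation}
with $\oint_w$ a small circle about $w$, and then substitute the singular part of the OPE. A pole of order $j+1$ with residue field $D(w)$ contributes, after the inner contour integral, $\binom{m+1}{j}\, w^{m+1-j} D(w)$; the outer integral then projects onto a single Laurent mode. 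This is the only computation needed, and it is linear in the right-hand side of each OPE.

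Carrying this out, the combination $\dfrac{2C(w)}{(z-w)^2} + \dfrac{\partial_w C(w)}{z-w}$ with $C = L^{00}+\lambda_{11}L^{11}$ (respectively $C = L^{11}+\lambda_{11}L^{00}$) collapses, via the identity $2(m+1) - (m+n+2) = m-n$, to $(m-n)\,C_{m+n}$, which reproduces the first term of each relation in the statement. The fourth-order pole $\dfrac{c_{\vec a}/2}{(z-w)^4}$ produces $\dfrac{c_{\vec a}}{2}\binom{m+1}{3}\,\delta_{m+n,0} = \dfrac{c_{\vec a}}{12}\, m(m^2-1)\,\delta_{m+n,0}$, the central term. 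Applying this to $L^{00}(z)L^{00}(w)$ and $L^{11}(z)L^{11}(w)$ yields the first displayed relation with central charge $c_{00}$, and to $L^{00}(z)L^{11}(w)$ yields the second with $c_{11}$.

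Since the OPEs themselves are assumed --- they are the output of the lengthy computation preceding the statement --- there is no genuine obstacle here: the derivation is standard residue bookkeeping. The only points demanding care are the sign check that confirms all brackets are commutators rather than anticommutators, and the combinatorial identities $2(m+1)-(m+n+2)=m-n$ and $\binom{m+1}{3} = \frac{1}{6}m(m^2-1)$ that assemble the quoted structure constants and central charges. Getting the normalization of $L^{\vec a}(z)$ in \eqref{10DSugawara} consistent with conformal weight two is what guarantees these come out exactly as stated.
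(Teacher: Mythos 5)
Your proposal is correct and takes essentially the same approach as the paper: the paper's (implicit) proof consists of the Virasoro-current OPEs derived just before the proposition, followed by the remark that the mode commutators are then "straightforward" to obtain from the Laurent expansion $L^{\vec{a}}(z)=\sum_m L^{\vec{a}}_m z^{-m-2}$. Your contour-integral bookkeeping is exactly that omitted final step, carried out correctly, including the sign check $(-1)^{(1,1)\cdot(1,1)}=+1$ reducing all brackets to commutators and the identities $2(m+1)-(m+n+2)=m-n$ and $\binom{m+1}{3}=\frac{1}{6}m(m^2-1)$ that produce the stated structure constants and central terms.
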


%
\section{Concluding Remarks} \label{SEC:CR}

We introduced the affine extensions of the $\Z2$-graded $osp(1|2)$ Lie superalgebras $\g^8 $ and $ \g^{10}.$ 
The main problem of the extension was the central extensions of the loop algebras $ L(\g^8)$ and $ L(\g^{10})$. 
This was solved by using a cohomological argument with the help of the observation that the algebras $\g^8$ and $ \g^{10}$ have non-degenerate invariant bilinear forms which is also a main result of the present work. 
The algebra $\g^{10} $ has $(1,1)$-graded bilinear form, however, none of the algebras have  $(1,0)$ or $(0,1)$-graded forms. To reflect this property, $\widehat{\g}^{10}$ has a $(1,1)$ graded central element and a $(1,1)$-graded derivation. 

The affine $\Z2$-graded algebras $\widehat{\g}^8 $ and $ \widehat{\g}^{10}$ were used by the Sugawara construction to discuss possible $\Z2$-graded extensions of the Virasoro algebra. 
A non-trivial extension with two central elements was obtained from $\widehat{\g}^{10}.$ 
The $\Z2$-graded Virasoro algebra consists of $(0,0)$ and $(1,1)$-graded currents, but it has no  fermionic currents due to the non-existence of $(0,1)$ and $(1,0)$ invariant bilinear forms. 
We have done a brute force search for the missing currents, however, it is still an open problem. 
We mention that other kinds of $\Z2$-extensions of the Virasoro algebra have been discussed in \cite{zhe,NAPSIJSinf}

Besides integrable systems related to $\widehat{\g}^8,$ or $ \widehat{\g}^{10}$, which were our motivation for the present work, it is interesting and important to study representations of these algebras. As shown in \cite{NAKA}, the irreps of the finite dimensional $\g^{10}$ are much richer than those of the superalgebra $osp(1|2)$. 
It is expected that the irreps of $\widehat{\g}^8 $ and $\widehat{\g}^{10}$ are also richer than those of $\widehat{osp(1|2)}.$ 

In the present work, we have considered the simplest $\Z2$-graded $osp(1|2)$ Lie superalgebra.  
The results presented here will be generalized to the larger class of $\Z2$-graded Lie superalgebras. In particular, the generalization to the $\Z2$-graded $osp$ series will be possible. Since its defining relations are explicitly given  in the literature, one can work out all the details explicitly \cite{GrJa,StoVDJ5}.

\section*{Acknowledgments}

N. A. is grateful to Francesco Toppan and Zhanna Kuznetsova for valuable discussions. 
He is supported by JSPS KAKENHI Grant Number JP23K03217. J. S. would like to thank N. Aizawa for his warm hospitality and financial support at the Osaka Metropolitan University where most part of this work was done

\end{document}